\newcommand{\zerodisplayskips}{%
  \setlength{\abovedisplayskip}{3pt}%
  \setlength{\belowdisplayskip}{3pt}%
  \setlength{\abovedisplayshortskip}{3pt}%
  \setlength{\belowdisplayshortskip}{3pt}}
\appto{\normalsize}{\zerodisplayskips}
\appto{\small}{\zerodisplayskips}
\appto{\footnotesize}{\zerodisplayskips}
\newtheorem{example}{example}
\newtheorem{remark}{remark}
\newtheorem{problem}{problem}
\newcommand{\srs}{R} 
\newcommand{\Trec}{\alpha}
\newcommand{\Tdur}{\beta}
\newcommand{\system}{(F,G,x_0)} 
\newcommand{\states}{\Vec{x}} 
\newcommand{\controls}{\Vec{u}} 
\newcommand{\recencoding}{c^{\varphi,rec}_t} 
\newcommand{\durencoding}{c^{\varphi,dur}_t} 
\newcommand{\recobj}{c^{\varphi,rec}_0} 
\newcommand{\durobj}{c^{\varphi,dur}_0} 
\newcommand{\probname}{resilient STL control\xspace} 
\newcommand{\probnamecap}{Resilient STL Control\xspace} 
\newcommand{\rescontroller}{resilient STL controller\xspace}
\newcommand{\name}{ResilienC\xspace}
\begin{document}
\title[An STL-based Approach to Resilient Control for Cyber-Physical Systems]{
An STL-based Approach to Resilient Control\break for Cyber-Physical Systems
}

%

\author{Hongkai Chen}
\affiliation{
  \institution{Department of Electrical and Computer Engineering,\\Stony Brook University}
  \city{Stony Brook}
  \country{USA}}
\email{hongkai.chen@stonybrook.edu}
\author{Scott A. Smolka}
\affiliation{
  \institution{Department of Computer Science,\\Stony Brook University}
  \city{Stony Brook}
  \country{USA}}
\email{sas@cs.stonybrook.edu}
\author{Nicola Paoletti}
\affiliation{
  \institution{Department of Informatics, \\King’s College London}
  \city{London}
  \country{UK}}
\email{nicola.paoletti@kcl.ac.uk}
\author{Shan Lin}
\affiliation{
  \institution{Department of Electrical and Computer Engineering,\\Stony Brook University}
  \city{Stony Brook}
  \country{USA}}
\email{shan.x.lin@stonybrook.edu}

\begin{abstract}


We present \emph{\name}, a framework for resilient control of Cyber-Physical Systems
subject to STL-based requirements.
\name utilizes a recently developed
formalism for specifying CPS resiliency in terms of sets of $(\mathit{rec},\mathit{dur})$ real-valued pairs, where $\mathit{rec}$ represents the system's capability to rapidly recover from a property violation (\emph{recoverability}), and $\mathit{dur}$ is reflective of its ability to avoid violations post-recovery (\emph{durability}).
We define the
\emph{resilient STL control problem} as one of \emph{multi-objective optimization}, where the recoverability and durability of the desired STL specification are maximized. When neither objective is prioritized over the other, the solution to the problem is a set of \emph{Pareto-optimal} system trajectories. We present a precise solution method to the \probname problem using a mixed-integer linear programming encoding and an \emph{a posteriori} $\epsilon$-constraint approach for efficiently retrieving the complete set of optimally resilient solutions. In \name, at each time-step, the optimal control action selected from the set of Pareto-optimal solutions by a \emph{Decision Maker} strategy realizes a form of \emph{Model  
Predictive Control}.
We demonstrate the practical utility of the \name framework on two significant case studies: autonomous vehicle lane keeping and deadline-driven, multi-region package delivery.

\end{abstract}

\maketitle

\section{Introduction}
Resiliency is of fundamental importance in Cyber-Physical Systems (CPS), as such systems are expected to fulfill safety- and mission-critical requirements even in the presence of external disturbances or internal faults.
Although various notions of resiliency have been proposed within a control setting~\cite{bouvier2021quantitative,zhu2011robust}, a general formal characterization has been lacking. Recently, Chen et al.~\cite{chen2022stl} used Signal Temporal Logic (STL)~\cite{maler2004monitoring} to formally reason about resiliency in CPS. 
Given an STL property $\varphi$ expressing a CPS requirement, the notion of resiliency introduced in~\cite{chen2022stl} permits violations of $\varphi$
as long as: 1)~the CPS quickly recovers from the violation, and then 2)~satisfies $\varphi$ for an extended period of time. These two requirements are called \emph{recoverability} and \emph{durability}, respectively. 

The results of~\cite{chen2022stl} naurally suggest the following problem of \emph{\probname}: find an optimal control strategy that
maximizes the system's resilience in terms of recoverability and durability.  These two objectives are often at odds with each other. For example, in the \emph{lane-keeping problem} (see Figure~\ref{fig:example1}), 
an aggressive control strategy can quickly return
the vehicle to the lane after a violation (good recoverability) but might fail to keep the vehicle in the lane for an extended period of time due to overshooting (poor durability). On the other hand, 
with a cautious strategy, 
the vehicle might take longer to re-enter the lane (poor recoverability) but subsequently manage to remain in the lane longer (good durability). In other words, without prioritizing one requirement
over the other, the aggressive and cautious strategies are \emph{mutually non-dominated} and, hence, equally resilient.

In this paper, we present a control framework called \emph{\name} (the `C' stands for control), where the \probname problem is formulated
as one of multi-objective optimization, designed to maximize both the recoverability and durability of the CPS. Unlike existing techniques for STL-based control~\cite{raman2014model,rodionova2021time} which focus on optimizing a single objective (e.g., spatial robustness in~\cite{raman2014model} and time robustness in~\cite{rodionova2021time}) and thus produce a single solution, our method results in a set of \emph{non-dominated, aka Pareto-optimal, solutions}. Such a method is also called \textit{a posteriori} 
 as it avoids making any \textit{a priori} assumptions about the relative importance of the two objectives (recoverability and durability). We achieve a Model Predictive Control (MPC) scheme with our method by deploying a \emph{Decision Maker} (DM) strategy that, at each time-step, selects the next optimal control action from among the set of Pareto-optimal solutions for execution by the plant. See Figure~\ref{fig:overall} for an overview of the \name framework. 

We solve the \probname problem in a precise manner, in that our method can retrieve the entire set of non-dominated resilient points. To do so, we focus on CPS with linear dynamics and encode the problem as one of mixed-integer linear programming (MILP). In particular, the solution set of the multi-objective problem is found by solving multiple single-objective MILP instances through an $\epsilon$-constraint approach~\cite{laumanns2006efficient}. 

From a theoretical standpoint, besides proving the correctness and characterizing the complexity of our algorithm, we establish an important relationship between our \probname problem and the time-robust STL control problem recently introduced in~\cite{rodionova2021time}.  Time-robust STL control seeks to optimize time robustness, i.e., the extent to which a trajectory can be shifted in time without affecting the satisfaction of the STL specification. We prove that any time-robust solution is also a resilient solution, making \probname a generalization of time-robust STL control.

\begin{figure}[t]
	\centering
	\includegraphics[width=\linewidth]{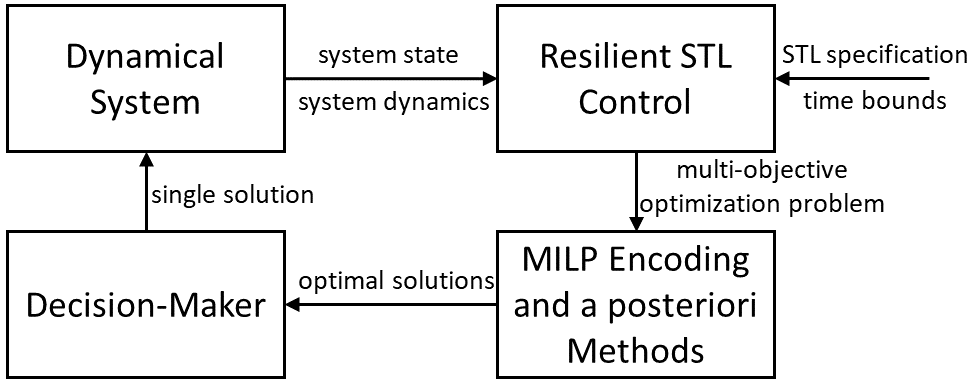}
	\caption{Overview of \name architecture.}
 \vspace{-4ex}
	\label{fig:overall}
\end{figure}

We evaluate \name on two case studies: lane keeping and deadline-driven multi-region package delivery. Our results clearly demonstrate the effectiveness of our solution method, which provides
a comprehensive view of the recoverability-durability tradeoff.  
Furthermore, we use the case studies to assess and compare the various DM strategies.

\vspace{-0.15ex}
In summary, our main contributions are the following.
\begin{itemize}
    \item We present \emph{\name}, a resilient control framework for CPS with STL-based requirements. We define the resilient control problem as one of multi-objective optimization such that the recoverability and durability metrics associated with the STL specifications are maximized and a set of  Pareto-optimal solutions is generated. We also propose various DM strategies for selecting a single optimal solution, used to generate MPC-based control actions. 
    To the best of our knowledge, we are the first to investigate a resilient control framework that co-optimizes recoverability and durability.
    \item We present a precise solution method, based on an MILP encoding and an a posteriori $\epsilon$-constraint approach, for efficiently retrieving the complete set of optimally resilient solutions.
    \item We prove that our resilient control framework is a generalization of time-robust STL control.
    \item We conducted two case studies for which we considered various control strategies that induce vastly different but equivalently resilient trajectories. We also illustrate the effects of multiple DM preferences on \name-based control. 
\end{itemize}


\section{Background}\label{sec:background}

In this section, we provide background on the syntax and semantics of both STL and the STL-based Resiliency Specifications of~\cite{chen2022stl}.
Let $\xi:\mathbb{T}\rightarrow\mathbb{R}^n$ be a signal where $\mathbb{T} = \mathbb{Z}_{\geq 0}$ is the (discrete) time domain. Given $t \in \mathbb{T}$ and interval $I$ on $\mathbb{T}$, $t + I$ is used to denote the set $\{t+t' \mid t' \in I\}$.

\subsection{Signal Temporal Logic}\label{subsec:stl}

STL is a logical formalism for specifying temporal properties over real-valued signals.
An STL atomic proposition $p\in \mathit{AP}$ is defined over $\xi$ and is of the form $p\equiv \mu(\xi(t)) \geq c$, 
$c\in\mathbb{R}$, and $\mu:\mathbb{R}^n\rightarrow\mathbb{R}$.  STL formulas $\varphi$ are defined according to the following grammar~\cite{donze2010robust}:
\begin{align*}
    \varphi \,::=\, 
    p\;|\; \neg\,\varphi\;|\; 
    \varphi_1\,\wedge\,\varphi_2 \;|\;     \varphi_1\,\mathbf{U}_I\,\varphi_2
\end{align*}
\noindent where $\mathbf{U}$ is the \emph{until} operator and\hspace{0.2em}$I$ is an interval on $\mathbb{T}$. Logical disjunction is derived from $\wedge$ and $\neg$ as usual, and
operators \emph{eventually} and \emph{always} are derived from $\mathbf{U}$ as usual: $\mathbf{F}_{I}\varphi = \top\,\mathbf{U}_I\,\varphi$ and $\mathbf{G}_{I}\varphi = \neg\, (\mathbf{F}_I \neg\,\varphi)$.
The satisfaction relation $(\xi,t)\models\varphi$, indicating whether $\xi$ satisfies $\varphi$ at time $t$, is defined as follows:
\vspace{0.25ex}
\begin{align*}
&(\xi, t) \models p &\Leftrightarrow &\hspace{2ex} \mu(\xi(t)) \geq c\vspace{0.02in}\\
&(\xi, t) \models \neg \varphi &\Leftrightarrow &\hspace{2ex} \neg ((\xi, t) \models \varphi)\vspace{0.02in}\\
&(\xi,t) \models \varphi_1 \wedge \varphi_2  &\Leftrightarrow&\hspace{2ex} (\xi,t) \models \varphi_1 \wedge (\xi,t) \models \varphi_2 \vspace{0.02in}\\
&(\xi,t) \models \varphi_1 \mathbf{U}_I \varphi_2 &\Leftrightarrow&\hspace{2ex} \exists\ t'\in t + I\ \text{s.t.}\ (\xi,t') \models \varphi_2 \wedge \\
&&&\hspace{6em}\forall\; t''\in [t,t'),\ (\xi,t'') \models \varphi_1 
\end{align*}
The \textit{characteristic function} $\chi$ of $\varphi$ relative to $\xi$ at time $t$ is defined such that $\chi(\varphi, \xi, t)=1$ when $(\xi,t)\models \varphi$ and $-1$ otherwise~\cite{donze2010robust}.

STL admits a quantitative semantics called (space) robustness~\cite{donze2010robust} that quantifies the extent to which $\xi$ satisfies $\varphi$ at time $t$. Its absolute value can be seen as the distance of $\xi$ from the set of trajectories satisfying (positive value) or violating (negative value) $\varphi$.

STL also admits
a
quantitative semantics called \emph{time robustness}~\cite{donze2010robust}, which is used to quantify the extent to which a trajectory can be shifted in time without affecting the satisfaction (or violation) of the STL specification.  Its definition is given in terms of the real-valued function $\theta^+$:
\begin{align*}
&\theta^+(p,\xi,t)&=&\hspace{1em}\chi(p, \xi,t)\cdot \max\{d\geq 0\ s.t.\ \forall\,t'\in [t,t+d], \\
&&&\hspace{10em}\chi(p, \xi,t') = \chi(p, \xi,t)\}\\
&\theta^+(\neg\varphi,\xi,t)&=&\hspace{1em}-\theta^+(\varphi,\xi,t)\\
&\theta^+(\varphi_1\wedge\varphi_2,\xi,t)&=&\hspace{1em}\min(\theta^+(\varphi_1,\xi,t),\theta^+(\varphi_2,\xi,t))\\
&\theta^+(\varphi_1\mathbf{U}_I\varphi_2,\xi,t)&=&\hspace{1em}\max_{t'\in t+I} \min(\theta^+(\varphi_2,\xi,t'), \min_{t''\in [t,t')}\theta^+(\varphi_1,\xi,t''))
\end{align*}
As for space robustness, STL time robustness is sound in that positive or negative values of $\theta^+$ corresponds to satisfaction or violation in the usual Boolean interpretation.

\subsection{STL-based Resilience}\label{subsec:srs}
We now give an overview of the formulation of STL resilience introduced in~\cite{chen2022stl}. 
The intuition is that given an STL formula $\varphi$, two properties characterize its resilience: recoverability and durability. Recoverability requires a signal to recover from a violation of $\varphi$ within time $\Trec$; durability requires a signal to maintain the satisfaction of $\varphi$ for at least a duration of $\Tdur$. A signal is resilient if it satisfy both properties.
Given an STL formula $\varphi$ and $\Trec,\Tdur\in \mathbb{T}$, 
$\Tdur>0$, the formula 
\begin{align*}
    \srs_{\Trec,\Tdur}(\varphi) \equiv \neg \varphi\mathbf{U}_{[0,\Trec]}\mathbf{G}_{[0,\Tdur)}\varphi
\end{align*}
captures both temporal requirements (recoverability and durability). In~\cite{chen2022stl}, $\srs_{\Trec,\Tdur}(\varphi)$ expressions are the \emph{atomic formulas} of an STL-like logic called \textit{STL-based Resiliency Specifications}~(SRS). 

Akin to space and time robustness, a quantitative semantics in the form of a \emph{Resilience Satisfaction Value} (ReSV) is proposed to measure the resilience of SRS formulas. The ReSV of $\srs_{\Trec,\Tdur}(\varphi)$ w.r.t.\ $\xi$ at time $t$ is a $(\mathit{rec},\mathit{dur})$ pair given by:
\begin{align*}
    (-t_{rec}(\varphi,\xi,t)+\Trec, t_{dur}(\varphi,\xi,t)-\Tdur)
\end{align*}
where 
\begin{align*}
    t_{rec}(\varphi,\xi,t) &= \min \left( \{d \in \mathbb{T} \mid (\xi,t+d) \models \varphi \} \cup \{ |\xi|-t \}\right)\\[0.25ex]
    t_{dur}(\varphi,\xi,t) &= \min \left( \{d \in \mathbb{T} \mid (\xi,t'+d) \models \neg \varphi \} \cup \{ |\xi|-t' \}\right),\\
    &\hspace{2em}t'= t+t_{rec}(\varphi,\xi,t)
\end{align*}
The value of $t_{rec}(\varphi,\xi,t)$ quantifies the time needed for $\xi$ to recover from a violation of $\varphi$ at time $t$, and $t'=t+t_{rec}(\varphi,\xi,t)$ is the (absolute) recovery time. The value of $t_{dur}(\varphi,\xi,t)$ quantifies the time period after $t'$ during which $\varphi$ remains true. Thus, $\mathit{rec}$ tells us how early before $\Trec$ we recover, and $\mathit{dur}$ how long after $\Tdur$ $\varphi$ is maintained true. Going forward, we often abbreviate $t_{rec}(\varphi,\xi,0)$ and $t_{dur}(\varphi,\xi,0)$ with $t_{rec}$ and $t_{dur}$, respectively. 

Akin to space and time STL robustness, the authors of~\cite{chen2022stl} prove that the ReSV semantics is sound in that $(\xi,t)\models \srs_{\Trec,\Tdur}(\varphi)$ holds if $-t_{rec}(\varphi,\xi,t)+\Trec\geq 0$ and $t_{dur}(\varphi,\xi,t)-\Tdur\geq 0$, with at least one of the two inequalities strictly holding.
Thus, the resilience of $\varphi$ w.r.t.\ $\xi$ at time $t$ can be represented by a $(\mathit{rec},\mathit{dur})$ pair. 

The ReSV definition and the soundness result extend to composite SRS formulas. The intuition behind this extension is that the ReSV of e.g., an \textit{always} (\textit{eventually}) formula with bound $I$, represents the worst-case (best-case) resilience value attained by the subformula within $I$. For control purposes, however, we are only interested in SRS atoms.  See also Remark~\ref{rem:ctrl_obj}.


For finding an optimally resilient control strategy, it is necessary to compare the resilience of $\varphi$ w.r.t.\ two signals. In~\cite{chen2022stl}, an ordering relation $\succ_{re}$ is introduced specifically for this purpose.\footnote{The motivation for introducing the ordering relation $\succ_{re}$ in~\cite{chen2022stl} is a different one, namely for computing the semantics of composite SRS formulas.} The intuition is that usual Pareto-dominance $\succ$ over the reals is not consistent with resiliency satisfaction.  Recall that given two real-valued tuples $x,y\in \mathbb{R}^n$, $x$ \textit{Pareto-dominates} $y$, denoted by $x \succ y$, if $x_i\geq y_i$, $1\leq i\leq n$, and $x_i>y_i$ for at least one such $i$, under the usual ordering $>$. Now consider the $(\mathit{rec},\mathit{dur})$ pairs $(-1,2)$ and $(1,1)$.  By usual Pareto-dominance, $(-1,2)$ and $(1,1)$ are mutually non-dominated, but an ReSV of $(-1,2)$ indicates that the system doesn't satisfy recoverability; namely it recovers one time unit too late. On the other hand, an ReSV of $(1,1)$ implies satisfaction of both recoverability and durability bounds, and thus should be preferred to $(-1,2)$. This intuition is formalized in the definition of the $\succ_{re}$ relation. 


\begin{definition}[Resiliency Binary Relations~\cite{chen2022stl}]\label{def:binaryrelation}
We define binary relations $\succ_{re}$, $=_{re}$, and $\prec_{re}$ in $\mathbb{Z}^2$.  Let $x,y\in \mathbb{Z}^2$ with $x=(x_r,x_d)$, $y=(y_r,y_d)$, and \emph{sign} is the signum function. 
We have that $x \succ_{re} y$ if one of the following holds:
\begin{enumerate}
    \item $sign(x_r)+sign(x_d) = sign(y_r)+sign(y_d)$, and $x \succ y$.
    \item $sign(x_r)+sign(x_d) > sign(y_r)+sign(y_d)$.
\end{enumerate}
We denote by $\prec_{re}$ the dual of $\succ_{re}$. If neither $x\succ_{re} y$ nor $y \prec_{re} x$,\footnote{This is equivalent to saying that $sign(x_r)+sign(x_d) = sign(y_r)+sign(y_d)$ and neither $x\succ y$ nor $y \succ x$.} then $x$ and $y$ are \emph{mutually non-dominated}, denoted $x =_{re} y$.
Under this ordering, a \emph{non-dominated set} $S$ is such that $x=_{re} y$ for all $x,y \in S$. 
\end{definition}


Given a binary relation $\vartriangleright$ and a non-empty set $P\subseteq\mathbb{Z}^2$, we denote with $\max_{\vartriangleright}(P)$ the maximum set induced by the ordering $\vartriangleright$, i.e., the largest subset $S\subseteq P$ such that $\forall x\in S$, $\forall y\in P$, $x\not \vartriangleleft y$. The minimum set is defined analogously as $\min_{\vartriangleright}(P) = \max_{\vartriangleleft}(P)$. In the following, we will use the so-called \textit{maximum resilience set} $\max_{\succ_{re}}(P)$, abbreviated as $\max_{re}(P)$, and the one induced by canonical Pareto-dominance $\max_{\succ}(P)$, abbreviated as $\max(P)$.

\section{Problem Formulation}\label{sec:prob_form}

Consider a discrete-time, linear dynamical system $\system$ with dynamics
    $x_{t+1}=Fx_t+Gu_t$, 
where $F\in\mathbb{R}^{n\times n}$, $G\in\mathbb{R}^{n\times m}$, $x_t\in \mathbb{R}^n$ is the system state, and $u_t\in U\subseteq  \mathbb{R}^m$ is the control input at time $t$, with the control space $U$ defined as a closed polytope. 
Any sequence of control actions $\controls=[u_0,\ldots,u_{H-1}]$ induces a sequence of system states $\states = [x_0,\ldots,x_{H}]$ starting at $x_0$ and generated by the system dynamics. 

We now define the \emph{\probnamecap} problem as a \emph{bi-objective optimization} problem aimed at maximizing the recoverability and durability of $\states$ with respect to $\varphi$ at time $0$.\footnote{We slightly abuse notation and use sequences instead of signals.\vspace{-2.0ex}}

\begin{problem}[\probnamecap]\label{def:res_problem}
Let $\varphi$ be an STL formula, $\system$ the control system, $H$ the
control horizon, and $\Trec,\Tdur\in \mathbb{T}$, 
$\Tdur>0$. 
Solve
\begin{align*}
    \mathcal{S}^*
    = \underset{\controls}{max_{re}}(\Trec-t_{rec}(\varphi,\states,0),\;
    t_{dur}(\varphi,\states,0)-\Tdur)
\end{align*}
s.t.\ \ 
    $x_{t+1} = Fx_t+Gu_t, \ u_t\in U, \ t \in [0,\ldots,H-1]$.
\end{problem}

\begin{remark}
\label{rem:ctrl_obj}
Time $t=0$ represents the offset of\hspace{0.1em} $\states$ at which $\varphi$ is evaluated. The formulation of the problem is still general because we can consider trajectories starting at any state $x_0$.
The optimization objectives of Problem~\ref{def:res_problem} correspond to the ReSV semantics of $\srs_{\Trec,\Tdur}(\varphi)$ formulas, which is an atom in the SRS temporal logic of~\cite{chen2022stl}. In this way, we focus on optimizing the recoverability and durability of the \emph{first} recovery episode w.r.t.\ $\varphi$ over an MPC-style prediction horizon $H$~\cite{rawlings2000tutorial}. This is arguably more useful for control purposes than optimizing the ReSV of SRS formulas with temporal operators, where optimizing the ReSV of $\mathbf{G}_I \srs_{\Trec,\Tdur}(\varphi)$  (\,$\mathbf{F}_I \srs_{\Trec,\Tdur}(\varphi)$) corresponds to optimizing the worst-case (best-case) recovery episode w.r.t.\ $\varphi$ within $I$. From a technical perspective, the ReSV of a composite formula is itself a set of non-dominated $(\mathit{rec},\mathit{dur})$ pairs (as opposed to a single pair for $\srs_{\Trec,\Tdur}(\varphi)$ atoms), which would unnecessarily complicate the definition of Problem~\ref{def:res_problem}. 
\end{remark}

The \emph{optimal solution} to Problem~\ref{def:res_problem} is a set $\mathcal{S}^*\subseteq\mathbb{R}^2$ of non-dominated $(\mathit{rec},\mathit{dur})$ pairs, i.e., the maximum resilience set ($\max_{re}$) of all $(\mathit{rec},\mathit{dur})$ pairs induced by all possible sequences of control inputs. 
%
We denote by $\mathcal{U}_1^* \subseteq U^H$ the set of optimal points in the decision (control) space, where each point induces one optimal solution in $\mathcal{S}^*$. 

\begin{figure}
	\centering
	\includegraphics[width=\linewidth]{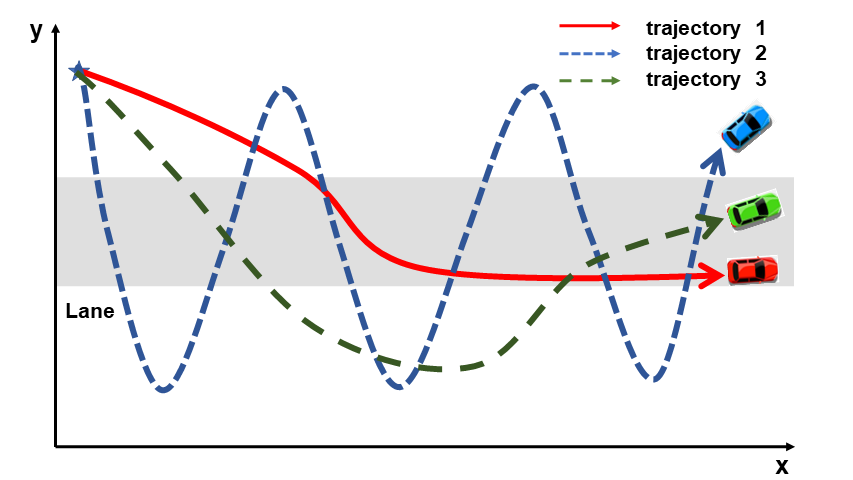}
	\caption{Vehicle trajectories corresponding to three optimal solutions provided by our
	\name framework.\hspace{0.5ex}   (The figure is for illustrative purposes only and doesn't directly reflect our experimental results.)}
	\label{fig:example1}
\end{figure}
\begin{example}
    In the lane-keeping problem, a vehicle is required to stay 
    within its lane (colored grey in Figure~\ref{fig:example1}) at all times.
    When the vehicle is forced to exit the lane due to e.g.\ an external disturbance, the \rescontroller\ predicts multiple optimal trajectories for the vehicle. Three such trajectories are shown in Figure~\ref{fig:example1}.  
    The initial location of the vehicle, marked by a star, is outside the lane,
    so it violates the lane-keeping specification $\varphi$. 
    Trajectory~1 is the slowest to recover from its  violation of $\varphi$, resulting in the worst recoverability among the three. However, the vehicle subsequently maintains $\varphi$ until the end of the trajectory, resulting in the best durability. In trajectory~3, the vehicle aggressively steers back into the lane whenever $\varphi$ is violated; so it exhibits the best recoverability. It cannot, however, maintain $\varphi$ after satisfaction due to overshooting; it thus has the worst durability. The behavior of trajectory~2 lies in-between trajectories~1 and~3. In our approach, the $(\mathit{rec},\mathit{dur})$ values of these three trajectories are mutually non-dominated and hence, equally resilient. 
\end{example}

Proposition~\ref{prop:maxset_parset} establishes the relationship between the sets of Pareto-optimal solutions obtained under the $\succ_{re}$ ordering and the usual $\succ$ ordering. This result is useful \emph{per se} and will be also required in Section~\ref{sec:solution_method} to prove the correctness of our algorithm.
\begin{proposition}\label{prop:maxset_parset}
    For any $P\subseteq\mathbb{Z}^2$ with $P\neq \emptyset$, $\max_{re}(P)\subseteq\max(P)$.
\end{proposition}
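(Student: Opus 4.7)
The plan is to argue by contrapositive: I would show that if $x \notin \max(P)$, then $x \notin \max_{re}(P)$. Equivalently, I show that whenever some $y \in P$ Pareto-dominates $x$ in the usual sense ($y \succ x$), then $y$ also dominates $x$ under the resiliency ordering ($y \succ_{re} x$). This directly yields the claimed inclusion.

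First I would unfold $y \succ x$ into coordinatewise inequalities: $y_r \geq x_r$ and $y_d \geq x_d$, with at least one strict. Since $\mathrm{sign} : \mathbb{Z} \to \{-1,0,1\}$ is monotone non-decreasing, this gives $\mathrm{sign}(y_r) \geq \mathrm{sign}(x_r)$ and $\mathrm{sign}(y_d) \geq \mathrm{sign}(x_d)$, hence
\[
\mathrm{sign}(y_r) + \mathrm{sign}(y_d) \;\geq\; \mathrm{sign}(x_r) + \mathrm{sign}(x_d).
\]

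Next I would perform a two-case split on whether this inequality is strict. If the two sign-sums are equal, clause~(1) of Definition~\ref{def:binaryrelation} applies directly, since we already have $y \succ x$, giving $y \succ_{re} x$. If the sign-sum of $y$ strictly exceeds that of $x$, clause~(2) immediately gives $y \succ_{re} x$. In either case $y \succ_{re} x$, so $x \notin \max_{re}(P)$, completing the contrapositive.

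I do not expect any real obstacle: the proof reduces entirely to the monotonicity of $\mathrm{sign}$ and a clean case analysis mirroring the two clauses in the definition of $\succ_{re}$. The only subtlety worth flagging is why $y \succ x$ alone is not enough to conclude $y \succ_{re} x$ without the case split — namely, clause~(1) is conditional on equality of sign-sums, and clause~(2) covers exactly the residual strict case, so the dichotomy is exhaustive by construction.
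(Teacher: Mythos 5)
Your proof is correct and is essentially the paper's argument in contrapositive form: the paper takes $x\in\max_{re}(P)$ and rules out $y\succ x$ via the same two-way case split on the sign-sums, with its case~(ii) being exactly the contrapositive of your monotonicity step. Your packaging of the key fact as ``$y\succ x$ implies $y\succ_{re} x$'' is a slightly cleaner statement of the same idea, but no new ingredient is involved.
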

\begin{proof}
    We prove that any $x$ in $\max_{re}(P)$ is also in $\max(P)$. Let us remark that, if $x=(x_r,x_d)\in \max_{re}(P)$, then $x\succ_{re}y$ or $x=_{re}y$ for all $y=(y_r,y_d)\in P$. Similarly, if $x\in \max(P)$, then for all $y\in P$, $x\succ y$ or $x$ and $y$ are mutually non-dominated w.r.t.\ $\succ$.  We distinguish two cases:
        \textbf{(i)}~if $sign(x_r)+sign(x_d) = sign(y_r)+sign(y_d)$, we have $x\succ y$ (when $x\succ_{re} y$) or $x$ and $y$ are mutually non-dominated (when $x=_{re} y$).
        \textbf{(ii)}~if $sign(x_r)+sign(x_d) > sign(y_r)+sign(y_d)$, we show that $x\not \prec y$ (i.e., $x\succ y$ or $x$ and $y$ are mutually non-dominated). If, by contradiction, $y$ Pareto-dominates $x$ then we have $y_r\geq x_r$ and $y_d\geq x_d$ with at least one inequality holding strictly. This contradicts the assumption $sign(x_r)+sign(x_d) > sign(y_r)+sign(y_d)$ because $sign$ function is monotonic non-decreasing.
\end{proof}

The related problem of \emph{Time-Robust STL Control Synthesis}~\cite{rodionova2021time} seeks to maximize a single objective: the STL time robustness of $\states$ with respect to $\varphi$ at time $0$. 
For this reason, its optimal solution is a single value (if one exists).

\begin{problem}[Time-Robust STL Control Synthesis]\label{def:trobust_problem}
Let $\varphi$ be an STL formula, $\system$ the control system, and $H$ the control horizon.
Solve
\begin{align*}
    \theta^*
    = \underset{\controls}{max}\,\theta^+(\varphi,\states,0)
\end{align*}
subject to the constraints of Problem~\ref{def:res_problem} and $\theta^+(\varphi,\states,0)\geq \theta_l >0$.
\end{problem}
We note that the time robustness $\theta^+(\varphi,\states,0)$ is constrained by a positive lower bound $\theta_l$, meaning that the above problem is not always feasible. The result $\theta^*$ is the \emph{optimal solution} to the problem. We denote with $\mathcal{U}_2^*$ the corresponding set of \emph{optimal points} in the decision/control space.

\begin{remark}[Optimal points]
    In Problem~\ref{def:res_problem}, there are in general multiple optimal points in $\mathcal{U}_1^*$ for two reasons: (1)~$\mathcal{S}^*$ may contain multiple optimal solutions; (2)~even if $\mathcal{S}^*$ contains only one solution, there might be multiple control strategies inducing the same optimal solution.\footnote{Even if two different controllers generate two different trajectories, these trajectories might have the same recoverability and durability.}
    In Problem~\ref{def:trobust_problem}, there may be multiple points in $\mathcal{U}_2^*$ for a similar reason: it possibly contains multiple $\controls$ yielding the same optimal time robustness $\theta^*$. 
\end{remark}

\begin{proposition} 
Problem~\ref{def:res_problem} is solvable, i.e., an optimal point exists.
Problem~\ref{def:trobust_problem} is solvable if feasible.
\end{proposition}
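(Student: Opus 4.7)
The plan is to exploit the fact that both problems have integer-valued, bounded objectives, so that the sets of achievable objective values are finite and optimality reduces to the existence of maximal elements under a suitable order.

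For Problem~\ref{def:res_problem}, I would first argue, directly from the definitions of $t_{rec}(\varphi,\states,0)$ and $t_{dur}(\varphi,\states,0)$ as minima over subsets of $\mathbb{T}$ clamped by the finite signal length, that both quantities take only finitely many integer values in $\{0,1,\ldots,H\}$ (up to the usual off-by-one in $|\states|$). Consequently, the image set
\[
P = \{(\Trec - t_{rec}(\varphi,\states,0),\; t_{dur}(\varphi,\states,0)-\Tdur) : \controls \in U^H\}
\]
is a finite subset of $\mathbb{Z}^2$. Since $U$ is a non-empty closed polytope, at least one admissible $\controls$ exists, so $P$ is non-empty.

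The main obstacle is showing that $\max_{re}(P)$ is non-empty on this finite non-empty $P$. My plan is to partition $P$ according to the value of $\sigma(x) = sign(x_r)+sign(x_d) \in \{-2,-1,0,1,2\}$ and to observe, using clause~(2) of Definition~\ref{def:binaryrelation}, that every element of a lower $\sigma$-stratum is $\succ_{re}$-dominated by every element of the top stratum $P' \subseteq P$. Hence $\max_{re}(P) = \max_{re}(P')$. On $P'$ the sign sums are equal, so clause~(1) reduces $\succ_{re}$ to standard Pareto dominance and $\max_{re}(P') = \max(P')$; this also matches the inclusion in Proposition~\ref{prop:maxset_parset}. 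Because the Pareto-maximum of a non-empty finite subset of $\mathbb{Z}^2$ is non-empty, the result follows; each element of $\max_{re}(P)$ is, by construction of $P$, witnessed by at least one $\controls \in \mathcal{U}_1^*$.

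For Problem~\ref{def:trobust_problem}, the argument is simpler and one-dimensional: $\theta^+(\varphi,\states,0)$ is integer-valued and bounded in magnitude by $H$, so its image over $\controls \in U^H$ is a finite subset of $\mathbb{Z}$. If the problem is feasible, the additional constraint $\theta^+ \geq \theta_l > 0$ selects a non-empty finite subset of positive integers, which admits a maximum attained by some $\controls \in \mathcal{U}_2^*$.
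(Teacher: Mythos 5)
Your proof is correct and takes the same basic route as the paper's: the objectives are integer-valued and bounded by the horizon, the feasible region is a non-empty closed polytope, so the set $P$ of achievable objective pairs is finite and non-empty and an optimum is attained. The paper's proof is a two-line statement of exactly this and stops there. What you add --- and what the paper leaves implicit --- is the justification that a finite non-empty $P$ actually has a non-empty maximum set under the non-standard order $\succ_{re}$. Note that Proposition~\ref{prop:maxset_parset} alone does not settle this, since the inclusion $\max_{re}(P)\subseteq\max(P)$ is consistent with $\max_{re}(P)=\emptyset$. Your stratification by $\sigma(x)=sign(x_r)+sign(x_d)$ closes that gap correctly: clause~(2) of Definition~\ref{def:binaryrelation} makes every element of a lower stratum $\succ_{re}$-dominated by any element of the top stratum $P'$, so $\max_{re}(P)$ can only contain points of $P'$, and on $P'$ clause~(1) reduces $\succ_{re}$ to ordinary Pareto dominance, whose maximum set over a finite non-empty set is non-empty. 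The one-dimensional argument for Problem~\ref{def:trobust_problem} matches the paper's. In short: same approach, but your version supplies the step the paper's one-liner takes for granted.
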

\begin{proof}
Problem~\ref{def:res_problem} has a closed-polytope feasible region. 
Its objectives are integers and bounded, so there exists an optimal point such that the optimal values are achieved, hence solvable. Similar reasoning applies to Problem~\ref{def:trobust_problem}; thus it is solvable if feasible.
\end{proof}

Although the two problems seem  different, we emphasize that Problem~\ref{def:res_problem} generalizes Problem~\ref{def:trobust_problem} in the sense that solving the latter is equivalent to finding a particular optimal solution to the former.\footnote{With this result at hand, we will skip any experimental comparison between time-robust and resilient controllers, as the former is a special case of the latter.} See the following proposition.

\begin{proposition}\label{prop:include_solution}
    For a system $\system$, control horizon $H$, and STL formula $\varphi$,  we have $\mathcal{U}^{*}_2\subseteq\mathcal{U}_{1}^*$.
\end{proposition}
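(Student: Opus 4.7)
The plan is to proceed by contradiction and show that every time-robust optimal control attains an optimally resilient ReSV. First, take any $\controls \in \mathcal{U}_2^*$ with induced trajectory $\states$. Feasibility of Problem~\ref{def:trobust_problem} gives $\theta^+(\varphi, \states, 0) = \theta^* \geq \theta_l > 0$, and by soundness of time robustness $(\states, 0) \models \varphi$. This immediately yields $t_{rec}(\varphi, \states, 0) = 0$, so the ReSV of $\states$ at time $0$ equals $(\Trec, t_{dur}(\varphi, \states, 0) - \Tdur)$; note that its first coordinate $\Trec$ is the largest value any recoverability component can take, since $t_{rec} \geq 0$ always.

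Next, I would assume for contradiction that $\controls \notin \mathcal{U}_1^*$. Then some control $\controls'$ with trajectory $\states'$ produces a ReSV $(\mathit{rec}', \mathit{dur}')$ that strictly $\succ_{re}$-dominates $(\Trec, t_{dur}(\varphi, \states, 0) - \Tdur)$. Unpacking Definition~\ref{def:binaryrelation} together with the forced bound $\mathit{rec}' \leq \Trec$: in the equal-sign-sum-with-Pareto case, strict dominance imposes $\mathit{rec}' = \Trec$ and $\mathit{dur}' > t_{dur}(\varphi, \states, 0) - \Tdur$; in the strictly-greater-sign-sum case, $sign(\mathit{rec}') \leq sign(\Trec)$ forces $sign(\mathit{dur}')$ to strictly exceed $sign(t_{dur}(\varphi, \states, 0) - \Tdur)$. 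Either way, the case split delivers $t_{rec}(\varphi, \states', 0) = 0$ (so $(\states', 0) \models \varphi$) and $t_{dur}(\varphi, \states', 0) > t_{dur}(\varphi, \states, 0)$.

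The last step is to convert these two conclusions into the inequality $\theta^+(\varphi, \states', 0) > \theta^* = \theta^+(\varphi, \states, 0)$, which contradicts the time-robust optimality of $\controls$ and closes the proof. I expect this conversion to be the main obstacle: for atomic predicates and Boolean combinations the identity relating $\theta^+(\varphi, \xi, 0)$ to the length of continuous satisfaction from $0$ makes the implication immediate, but under the $\mathbf{U}$ operator the $\max$--$\min$ recursion in the definition of $\theta^+$ decouples it from pure duration, and the argument will require a structural induction on $\varphi$ that exploits the system dynamics to rule out trajectories where $t_{dur}$ strictly increases without a matching gain in $\theta^+$. Proposition~\ref{prop:maxset_parset} then provides a convenient sanity check that the $\succ_{re}$-optimal point extracted this way is also Pareto-optimal under the usual ordering.
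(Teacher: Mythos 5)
Your proposal takes a contradiction-plus-case-analysis route through the $\succ_{re}$ relation, whereas the paper argues directly: since $\theta^*>0$, the time-robust optimal trajectory satisfies $\varphi$ at time $0$ and maintains it ``for as long as possible,'' so it attains the global maximum of both $\Trec-t_{rec}$ and $t_{dur}-\Tdur$ simultaneously and hence lies in $\mathcal{S}^*$. Your first half is sound and matches the paper ($t_{rec}=0$, so the first coordinate is already maximal). But your case analysis breaks in the strictly-greater-sign-sum case: $sign(\mathit{rec}')\leq sign(\Trec)$ does force $sign(\mathit{dur}')>sign(\mathit{dur})$ and hence $t_{dur}'>t_{dur}$, but it does \emph{not} force $\mathit{rec}'=\Trec$. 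For instance, with $\Trec=5$ a competitor with ReSV $(2,1)$ strictly $\succ_{re}$-dominates $(5,-1)$ via clause (2) of Definition~\ref{def:binaryrelation}, yet it has $t_{rec}'=3>0$ and so violates $\varphi$ at time $0$; by soundness its time robustness at $0$ is negative, so it produces no contradiction with the $\theta^+$-optimality of $\controls$, and your argument cannot close this case. The paper's direct claim that the time-robust optimum maximizes $t_{dur}$ over \emph{all} trajectories (not just those satisfying $\varphi$ at time $0$) is precisely what would be needed to rule such a competitor out.

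The second gap is the one you flag yourself: converting $(\states',0)\models\varphi$ and $t_{dur}(\varphi,\states',0)>t_{dur}(\varphi,\states,0)$ into $\theta^+(\varphi,\states',0)>\theta^+(\varphi,\states,0)$. You are right that the recursive $\max$--$\min$ definition of $\theta^+$ decouples it from the satisfaction duration of the top-level formula --- already for disjunction, not only for $\mathbf{U}$, the characteristic function of $\varphi_1\vee\varphi_2$ can remain $1$ far longer than $\max(\theta^+(\varphi_1,\xi,0),\theta^+(\varphi_2,\xi,0))$ --- but you do not supply the structural induction you say is required, so the proof is incomplete as written. It is worth noting that the paper's own proof silently relies on essentially the same identification of maximal $\theta^+$ with maximal $t_{dur}$, so the obstacle you isolated is not an artifact of your contradiction-based route; still, identifying the obstacle is not the same as overcoming it, and your proposal does not establish the proposition.
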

\begin{proof}

If Problem~\ref{def:trobust_problem} is infeasible, it is trivial that $\mathcal{U}^{*}_2=\emptyset\subseteq\mathcal{U}^{*}_1$. Otherwise, we prove that $\controls\in \mathcal{U}_{1}^*$ for all $\controls\in \mathcal{U}_{2}^*$. 
Because $\theta^*>0$, executing $\controls$ on $\system$ induces a trajectory where $\varphi$ is satisfied at time $0$ and $\varphi$ is maintained for as long as possible from time~$0$. 
This indicates that $\controls$ maximizes $\Trec-t_{rec}$ and $t_{dur}-\Tdur$ to their global maximum simultaneously, which is an optimal solution in $\mathcal{S}^*$.
Therefore, we have $\controls \in\mathcal{U}_1^*$.
\end{proof}

\section{Solution~method~for Resilient STL Control}\label{sec:solution_method}

In this section, we introduce our solution method for solving Problem~\ref{def:res_problem}. We note that both the STL Boolean semantics and the resiliency objectives $\Trec-t_{rec}$ and $t_{dur}-\Tdur$ are discrete (hence, non-smooth), which makes gradient-based methods unsuitable.
Meta-heuristics similarly tend to perform poorly and do not provide optimality guarantees.  For linear systems, however, prior work has shown that (single-objective) optimization of STL space and time robustness can be formulated and precisely solved as an MILP problem~\cite{raman2014model,rodionova2021time}.

Here, we take a similar approach and encode $t_{rec}$ and $t_{dur}$ using MILP constraints, building on the encoding of the Boolean STL semantics of Raman et al.~\cite{raman2014model}. To retrieve the full set of Pareto-optimal solutions, we define an $\epsilon$-constraint approach~\cite{laumanns2006efficient} that solves the bi-objective problem through multiple single-objective MILP instances, where one of the objectives is optimized and the other is constrained above some given level. 

Section~\ref{subsec:encode} presents our MILP encoding of the Boolean STL semantics and the resiliency objectives.  In Section~\ref{subsec:moo}, we present an $\epsilon$-constraint approach for efficiently computing the set of non-dominated optimal solutions, and provide a proof of its correctness. Section~\ref{subsec:complexity} analyzes our algorithm's computational complexity. 

\subsection{MILP Encoding}\label{subsec:encode}

The encoding method consists of the following three main steps.

\noindent
\textbf{(1)~Boolean semantics for STL atomic propositions.} Let $p=\mu(x_t)\geq c$ be an STL atomic proposition.
We use binary variables $z^\mu_t\in\{0,1\}$ to represent Boolean satisfaction ($z^\mu_t=1$) or violation ($z^\mu_t=0$) of $p$ over the control horizon at every time step $t=0,\ldots,H$. Assuming that the STL atomic propositions are linear w.r.t.\ $x_t$ (i.e., $\mu$ is a linear function), we can encode the Boolean semantics of STL atomic propositions with MILP constraints.
\begin{align}
       (z^\mu_t-1)\cdot M \leq\mu(x_t)-c\leq z^\mu_t\cdot M\label{equ:stl_predicate_encode}
\end{align}
where $M$ is a significantly large value.

\noindent
\textbf{(2)~Boolean semantics for STL composite formulas.}
The Boolean semantics for STL composite formulas are derived from STL atomic propositions using Boolean conjunction and disjunction. For a given STL formula $\varphi$, we introduce binary variables $z^\varphi_t$ to represent the Boolean semantics of $\varphi$ at time $t=0,\ldots,H$; i.e., $z^\varphi_t=1$ if $\varphi$ holds at time $t$, $0$ otherwise. 
The MILP encoding of $z^\varphi_t$ using only Boolean operators can be derived inductively~\cite{raman2014model}. 

\emph{Negation} $\varphi' = \neg\varphi$:  \begin{align}
    z^{\varphi'}_t=1-z^{\varphi}_t\label{equ:stl_negation_encode}
\end{align}

\emph{Conjunction} $\varphi=\bigwedge^m_{i=1}\varphi_i$:
\begin{equation}
\label{equ:stl_conjunction_encode}
\begin{aligned}
z^{\varphi}_t&\leq z^{\varphi_i}_t, i=1,\ldots,m\\
z^{\varphi}_t&\geq 1-m+{\textstyle\sum}_{i=1}^m z^{\varphi_i}_t
\end{aligned}
\end{equation}

\emph{Disjunction} $\varphi=\bigvee^m_{i=1}\varphi_i$:
\begin{equation}
\label{equ:stl_disjunction_encode}
\begin{aligned}
z^{\varphi}_t&\geq z^{\varphi_i}_t, i=1,\ldots,m\\
z^{\varphi}_t&\leq {\textstyle\sum}_{i=1}^m z^{\varphi_i}_t
\end{aligned}
\end{equation}

We now consider the encoding for STL formulas with temporal operators~\cite{raman2014model}. In particular, the \emph{always} and \emph{eventually} operators are respectively encoded as finite conjunctions and disjunctions using~\eqref{equ:stl_conjunction_encode} and~\eqref{equ:stl_disjunction_encode}. 
Below, we use the notation $a_t^H = \min(a+t,H)$ and $b_t^H = \min(b+t,H)$. Note that $a_t^H$ and $b_t^H$ are not additional MILP variables.


\emph{Always} $\varphi=\mathbf{G}_{[a,b]}\varphi_1$: we encode $z^{\varphi}_t$ as 
$\bigwedge^{b_t^H}_{i=a_t^H}z^{\varphi_1}_t$.


\emph{Eventually} $\varphi=\mathbf{F}_{[a,b]}\varphi_1$: we encode $z^{\varphi}_t$ as 
$\bigvee^{b_t^H}_{i=a_t^H}z^{\varphi_1}_t$.


\emph{Until} $\varphi=\varphi_1\mathbf{U}_{[a,b]}\varphi_2$: the satisfaction of $\varphi$ at $t$ can be derived from those of the following formulas, including an unbounded $\mathbf{U}$, to achieve a linear encoding w.r.t.\ $H$~\cite{bartocci2018specification}.
In particular, we encode $z^{\varphi}_t$ as $z^{\varphi'}_t$ given that $\varphi$ is equivalent to $\varphi'$, where 
\begin{equation}
\label{equ:stl_until_encode}
\varphi' = \mathbf{G}_{\left[0,a-1\right]}\varphi_1 \wedge \mathbf{F}_{\left[a,b\right]}\varphi_2 \wedge \mathbf{F}_{\left[a,a\right]}(\varphi_1\mathbf{U}\,\varphi_2).
\end{equation}
We note that $\varphi_1\mathbf{U}_{[a,b]}\,\varphi_2$ if $\varphi_1$ holds before $a$, after which $\varphi_1\mathbf{U}\,\varphi_2$ holds when $\varphi_2$ is satisfied before $b$. 
The first two conjuncts of~\eqref{equ:stl_until_encode} can be derived using the MILP encoding for the \emph{always} and \emph{eventually} operators. The unbounded \emph{until} in the last conjunct is encoded as follows~\cite{raman2014model}.
\begin{align*}
z^{\varphi_1\mathbf{U}\,\varphi_2}_t = z_t^{\varphi_2} \vee (z_t^{\varphi_1} \wedge z_{t+1}^{\varphi_1\mathbf{U}\,\varphi_2})
\end{align*}
for all $t=1,\ldots,H-1$, and $z^{\varphi_1\mathbf{U}\,\varphi_2}_H = z_H^{\varphi_2}$. 


\noindent
\textbf{(3)~Resilient STL control objectives.} 
Given an SRS expression $\srs_{\Trec,\Tdur}(\varphi)$, $\varphi$ an STL formula, we introduce variables $\recencoding$ and $\durencoding$ (and associated MILP constraints) to encode $t_{rec}(\varphi,\states,t)$ and $t_{dur}(\varphi,\states,t)$, respectively. 
Inspired by the encoding in~\cite{rodionova2021time}, $\recencoding$ and $\durencoding$ are defined as counters that, informally, keep track of the number of time units $\varphi$ remains violated and satisfied, respectively.   
\begin{align}
&\recencoding = (1-z^\varphi_t)\cdot(c^{\phi,rec}_{t+1}+1),
&c^{\phi,rec}_{H} = 0 \label{equ:c0t}
\end{align}
Variable $\recencoding$ is defined in reverse-temporal order; we first set $c^{\phi,rec}_{H}=0$. At time $t$, if $z^\varphi_t=1$ (i.e., $(\states,t) \models \varphi$ holds), we have $c^{\phi,rec}_t=0$; if $z^\varphi_t=0$ (i.e., $(\states,t) \models \varphi$ does not hold), we have $c^{\phi,rec}_t=c^{\phi,rec}_{t+1}+1$. 
Thus, if $\varphi$ doesn't hold at time $t$, $c^{\phi,rec}_t$ represents the time needed for $\varphi$ to recover (or the time until the end of $\states$ if $\varphi$ never recovers); or $0$ if $\varphi$ holds at $t$. We can see that $\recencoding$ follows exactly the the definition of $t_{rec}(\varphi,\states,t)$ in Section~\ref{subsec:srs}, whereby if $(\states,t) \models \varphi$ holds, we have $t_{rec}(\varphi,\states,t)=0$; otherwise, $t_{rec}(\varphi,\states,t)$ is the time needed for $\varphi$ to recover from violation.


To define $\durencoding$, we employ additional counter variables $c^1_t,c^2_t\in\mathrm{Z}_{\geq 0}$ for $t=0,\ldots,H$, which are similarly defined 
in reverse order as follows.
\begin{equation}
\label{equ:c1tc2t}
\begin{aligned}
&c^1_t = z^\varphi_t\cdot(c^1_{t+1}+1),
&c^1_{H} = 0\\
&c^2_t = (1-z^\varphi_t)\cdot(c^1_{t+1}+c^2_{t+1}),
&c^2_{H} = 0
\end{aligned}
\end{equation}
At time $t$, if $z^\varphi_t=1$, we have $c^1_t=c^1_{t+1}+1$, meaning that $c^1_t$ counts how many time units $\varphi$ remains true after $t$; if $z^\varphi_t=0$, i.e., $\varphi$ is false at $t$, we have $c^1_t=0$. 
At time $t$, if $z^\varphi_t=1$, we have $c^2_t=0$; if $z^\varphi_t=0$, we have $c^2_t=c^1_{t+1}+c^2_{t+1}$.
This variable keeps track, when $\varphi$ is false at $t$, for how long $\varphi$ will remain true after the next recovery episode. 

By the definition of $t_{dur}(\varphi,\states,t)$ in Section~\ref{subsec:srs}, if $(\states,t) \models \varphi$ holds, then $t_{dur}(\varphi,\states,t)$ is the time duration until a violation of $\varphi$ (or the end of $\states$); if instead $(\states,t) \not\models \varphi$,  $t_{dur}(\varphi,\states,t)$ refers to the duration-to-violation after the next recovery, and hence  remains constant until recovery.
We can see that $c^1_t,c^2_t$ respectively represent the behaviors of $t_{dur}(\varphi,\states,t)$ during satisfaction and violation of $\varphi$, thus
\begin{align}
    \durencoding = c^1_t+c^2_t\label{equ:tdur}
\end{align} 

\begin{remark}
    The MILP encoding for resilience objectives involves the multiplication of binary variables and (integer) counter variables, which is nonlinear. Nonetheless, we can convert them to MILP inequality constraints using the translation of the \emph{if-then-else} logic relation~\cite{bemporad2001discrete}. Let $z$ be a binary variable and $c$ an integer variable.  Then the relation $y = z\cdot c$ is equivalent to the following inequalities:
    \begin{equation}
    \label{equ:if_then_else}
        \begin{aligned}
            m\cdot z \leq & \;y\leq M\cdot z\\
        c-M\cdot(1-z)\leq & \;y\leq c-m\cdot(1-z)
        \end{aligned}
    \end{equation}
    where $M$ and $m$ are the upper and lower bounds for $c$, respectively.
\end{remark}

\begin{table}[t]
\centering
\resizebox{\columnwidth}{!}{%
\begin{tabular}{@{}lllllllllll@{}}
\toprule
$t$  & 0 & 1 & 2 & 3 & 4 & 5 & 6 & 7 & 8 & 9 \\ \midrule
$\bm{\chi(\varphi,\states,t)}$ & -1  & 1  & 1  & -1  & -1  & -1  & 1  & 1  & 1  & -1\\ \midrule
$z^\varphi_t$  & 0  & 1  & 1  & 0  & 0  & 0  &  1 & 1  & 1  & 0     \\ \midrule
$c^1_t = z^\varphi_t\cdot(c^1_{t+1}+1)$ & 0  & 2  & 1  &  0 &  0 & 0  & 3  & 2  & 1  &  0   \\ \midrule
$c^2_t = (1-z^\varphi_t)\cdot(c^1_{t+1}+c^2_{t+1})$ & 2  & 0  & 0  & 3  & 3  & 3  & 0  & 0  & 0  &   0   \\ \midrule
$\bm{\recencoding = (1-z^\varphi_t)\cdot(c^{\varphi,rec}_{t+1}+1)}$& 1  & 0  &  0 & 3  & 2  & 1  &  0 & 0  &  0 &  0 \\ \midrule
$\bm{\durencoding = c^1_t+c^2_t}$ & 2  & 2  & 1  & 3  &  3 &  3   &  3  &  2  &  1  &  0  \\ \bottomrule
\end{tabular}%
}
\caption{Example of MILP encoding  $c_t^{\varphi,rec}$,  $c_t^{\varphi,dur}$ of the resiliency objectives.}
\vspace{-4ex}
\label{tab:example1}
\end{table}

Table~\ref{tab:example1} provides an example of the encoding method. Consider $\states$ with $H=9$ and STL formula $\varphi$; characteristic function $\chi(\varphi,\states,t)$ is given in the table. A step-by-step computation of
$\recencoding$ and $\durencoding$ is provided; the results so obtained are the same as those for $t_{rec}(\varphi,\states,t)$ and $t_{dur}(\varphi,\states,t)$, computed using their definitions provided in Section~\ref{subsec:srs}. 
We define the function $$(\recencoding,\durencoding,\mathcal{C}_m)=\texttt{milp_encoding}(\varphi,\states)$$ that takes an STL formula $\varphi$ and a sequence of system states $\states$ as input, and outputs the set of encoded MILP constraints $\mathcal{C}_m$ and encoded variables $\recencoding,\durencoding$ using Eqs.~(\ref{equ:stl_predicate_encode})-(\ref{equ:tdur}). 


\subsection{Multi-Objective Optimization}\label{subsec:moo}

To address the challenge of multi-objective optimization, we propose an \emph{a posteriori}
method to reduce Problem~\ref{def:res_problem} to a sequence of single-objective MILP instances (by optimizing one of the objectives and constraining the other above some given level)   
and to efficiently generate the exact set of non-dominated optimal solutions. The single-objectives MILP instances can be solved by standard techniques such as branch-and-bound methods~\cite{lawler1966branch}. 

A particular property of the \probname\ problem is that both objective functions are discrete and bounded by the length of the horizon, so the number of optimal solutions is finite. Also, the optimal solutions are non-dominated with respect to the $\succ_{re}$ relation (which may not be the case with conventional Pareto dominance). We propose an algorithm for computing $\mathcal{S}^*$ taking these properties into account. 
First, we define the following problem.
\begin{flalign*}
    \hspace{-10em}P_{\epsilon}:\hspace{2em} \controls^*_\epsilon=arg\,\underset{\controls}{max}\; 
    t_{dur}(\varphi,\states,0)
    - \Tdur
\end{flalign*}
subject to the constraints of Problem~\ref{def:res_problem} and an additional \emph{$\epsilon$-constraint} $\Trec - t_{rec}(\varphi,\states,0) \geq \epsilon$. (One could equivalently maximize $\Trec - t_{dur}(\varphi,\states,0)$ and constrain $t_{dur}(\varphi,\states,0) - \Tdur\geq \epsilon$.) 
We respectively denote by $f_\epsilon^*$ and $g_{\epsilon}^*$ the values of $\Trec-t_{rec}(\varphi,\states,0)$ and $t_{dur}(\varphi,\states,0)-\Tdur$ corresponding to $\controls^*_\epsilon$. Our algorithm consists of the following steps.
\begin{enumerate}
    \item Let $\mathcal{S}^*=\emptyset$
    and $\epsilon=\Trec-(H-1)$. 
    \item If $P_\epsilon$ is feasible, go to step (3); otherwise, go to step (5).
    \item Solve $P_\epsilon$, then $\mathcal{S}^*=\mathcal{S}^*\cup\{(f_{\epsilon}^*,g_{\epsilon}^*)\}$. Set $\epsilon=f_{\epsilon}^*+1$.
    \item If 
    $\epsilon<\Trec+1$, go to step~(2); otherwise, go to step (5).
    \item Return $\mathcal{S}^*=\max_{re}\mathcal{S}^*$.
\end{enumerate}
The overall solution method is summarized in Algorithm~\ref{alg:overall}. 
To solve problem $P_{\epsilon}$, we encode it into MILP. To do so, we first generate the encoding for the control system through function \texttt{system}\_\texttt{constraints},  which takes as input the system $\system$ and decision variables $\controls$, and outputs the signal $\states$ (as a sequence of real variables) and the constraints $\mathcal{C}_s$ determined by the system dynamics. Second, we generate the encoding for the resiliency objectives through function \texttt{milp_encoding} as described in Section~\ref{subsec:encode}.

\begin{algorithm}[tb]
   \caption{Solution Method for \probnamecap}
   \label{alg:overall} 
\begin{algorithmic}[1]
    \INPUT{STL formula $\varphi$, control system $\system$, control\break horizon~$H$, time bounds $\Trec,\Tdur$.}
    \OUTPUT{The sets $\mathcal{S}^*$ and  $\mathcal{U}^*_1$ of Problem~\ref{def:res_problem}.}
    \STATE Initialize $\mathcal{S}^*=\emptyset$
    and $\epsilon=\Trec-(H-1)$. \WHILE{
    $\epsilon<\Trec+1$}
    \STATE Let $\controls$ be the decision variables.
    \STATE $(\states,\mathcal{C}_s)=\texttt{system_constraints}(\system,\controls)$.
    \STATE $(\recobj,\durobj,\mathcal{C}_m)=\texttt{milp_encoding}(\varphi,\states)$.
    \IF{$P_\epsilon$ is feasible}
    \STATE Solve $P_\epsilon$ as MILP and obtain $\controls_{\epsilon}^*$, $f_{\epsilon}^*$ and $g_{\epsilon}^*$.\label{line:solve_pe}
    \STATE $\mathcal{S}^*=\mathcal{S}^*\cup\{(f_{\epsilon}^*,g_{\epsilon}^*)\}$ and $\mathcal{U}^*_1=\mathcal{U}^*_1\cup\{\controls_{\epsilon}^*\}$.
    \STATE $\epsilon=f_{\epsilon}^*+1$.\label{line:eps_update}
    \ELSE
    \STATE $\epsilon=+\infty$.
    \ENDIF
    \ENDWHILE
    \STATE $\mathcal{S}^*=max_{re}\,\mathcal{S}^*$ and update $\mathcal{U}^*_1$ correspondingly.\label{line:max_res_set}
    \STATE \textbf{Return} $\mathcal{S}^*$ and $\mathcal{U}^*_1$.
\end{algorithmic}
\end{algorithm}

\begin{proposition}
    Algorithm~\ref{alg:overall} computes the exact set of optimal solutions $\mathcal{S}^*$ of Problem~\ref{def:res_problem} in a finite number of steps.
\end{proposition}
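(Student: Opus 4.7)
My plan is to prove the proposition in two stages: termination, then correctness, with the correctness argument reduced to a statement about conventional Pareto-dominance $\succ$ via Proposition~\ref{prop:maxset_parset}.

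\emph{Termination} is the easy part. Each iteration either exits via the infeasibility branch or updates $\epsilon \leftarrow f_\epsilon^* + 1$. Because $f_\epsilon^* \geq \epsilon$ by the $\epsilon$-constraint and both are integer-valued, $\epsilon$ strictly increases by at least one per iteration. Combined with the initial value $\Trec - (H-1)$, the upper bound $f_\epsilon^* \leq \Trec$ (since $t_{rec}\geq 0$), and the stopping condition $\epsilon \geq \Trec + 1$, the loop runs for at most $H+1$ iterations.

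\emph{Correctness} splits into two sub-claims: (i) the set $\mathcal{S}^*_c$ collected at line~8 before the final filter contains every $\succ$-Pareto-optimal feasible $(\mathit{rec},\mathit{dur})$ pair; and (ii) filtering $\mathcal{S}^*_c$ by $\max_{re}$ at line~\ref{line:max_res_set} yields exactly $\mathcal{S}^*$. Sub-claim (ii) follows immediately from Proposition~\ref{prop:maxset_parset}: denoting by $P_{\text{full}}$ the full feasible objective set, every $\succ_{re}$-optimum of $P_{\text{full}}$ is also $\succ$-optimal and hence lies in $\mathcal{S}^*_c$ by (i); conversely $\mathcal{S}^*_c \subseteq P_{\text{full}}$ by construction, so applying $\max_{re}$ to $\mathcal{S}^*_c$ returns exactly $\max_{re}(P_{\text{full}}) = \mathcal{S}^*$.

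The main obstacle is (i). I would fix a $\succ$-Pareto-optimal pair $(r^*,d^*)$ and track the $\epsilon$-trajectory of the algorithm. The goal is to establish, by induction on the iteration index, that for every $i$ with $\epsilon_i \leq r^*$ the optimum of $P_{\epsilon_i}$ satisfies $g_{\epsilon_i}^* = d^*$ and $f_{\epsilon_i}^* \leq r^*$. The inequality $g_{\epsilon_i}^* \geq d^*$ holds because $(r^*,d^*)$ is feasible for the looser constraint $\mathit{rec} \geq \epsilon_i$; both $g_{\epsilon_i}^* > d^*$ and $f_{\epsilon_i}^* > r^*$ would each furnish a feasible point strictly $\succ$-dominating $(r^*,d^*)$, contradicting its Pareto-optimality. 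Consequently $\epsilon_{i+1} = f_{\epsilon_i}^* + 1 \in [\epsilon_i + 1, r^*+1]$, and since $\epsilon$ is strictly increasing and integer-valued some iteration $j$ attains $f_{\epsilon_j}^* = r^*$ in finitely many steps, at which point $(r^*,d^*)$ is inserted into $\mathcal{S}^*_c$. The subtle aspect is handling the MILP solver's tie-breaking: a priori the solver might return any optimizer of $P_{\epsilon_i}$, and we must guarantee $\epsilon$ cannot ``jump over'' $r^*$. The inductive bound $f_{\epsilon_i}^* \leq r^*$ is precisely what rules this out, and constitutes the crux of the argument.
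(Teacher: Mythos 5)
Your proposal follows essentially the same route as the paper: show that the set collected by the $\epsilon$-constraint sweep contains every $\succ$-Pareto-optimal pair, then invoke Proposition~\ref{prop:maxset_parset} to conclude that filtering with $\max_{re}$ recovers exactly $\max_{re}$ of the full feasible objective set, and argue termination from the integer, bounded, strictly increasing $\epsilon$. The difference is in how the first sub-claim is handled: the paper merely observes that $f^*_\epsilon$ is strictly increasing and $g^*_\epsilon$ non-increasing across iterations and concludes from this monotonicity that no $\succ$-optimal point is missed, whereas you make the actual mechanism explicit --- the invariant $f^*_{\epsilon_i}\leq r^*$ whenever $\epsilon_i\leq r^*$, which rules out the solver ``jumping over'' a Pareto-optimal recoverability level regardless of tie-breaking among optimizers. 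Your version is the more rigorous of the two on this point. One local error: your inductive hypothesis asserts $g^*_{\epsilon_i}=d^*$ for \emph{every} iteration with $\epsilon_i\leq r^*$, and justifies it by claiming $g^*_{\epsilon_i}>d^*$ would yield a point dominating $(r^*,d^*)$; that is false, since the maximizer of durability under the looser constraint may have recoverability strictly below $r^*$ (e.g., front $\{(1,10),(3,5)\}$ with $(r^*,d^*)=(3,5)$ and $\epsilon_i=1$ gives $g^*_{\epsilon_i}=10$). Only $g^*_{\epsilon_i}\geq d^*$ holds in general; equality is needed, and is obtainable from Pareto-optimality, only at the iteration where $f^*_{\epsilon_j}=r^*$. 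Since carrying the induction requires only $f^*_{\epsilon_i}\leq r^*$, which you justify correctly and independently, this is a one-line repair rather than a structural gap.
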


\begin{proof}
To prove the correctness of Algorithm~\ref{alg:overall}, we first prove that the points in $\mathcal{S}^*$ upon entry to step (5) above include all Pareto-optimal solutions according to the traditional ordering $\succ$. Then we prove that step (5) computes the exact set of optimally resilient solutions (according to $\succ_{re}$). 

To prove the first statement, it is enough to observe that at each iteration of the above while-loop, $f_{\epsilon}^*$ is strictly increasing and $g_{\epsilon}^*$ is non-increasing w.r.t.\ $\epsilon$, meaning that the $(f_{\epsilon}^*,g_{\epsilon}^*)$ pair at one iteration either dominates or is mutually non-dominated by the one at the previous iteration (according to $\succ$). Hence, $\mathcal{S}^*$ include all (but not necessarily only) the Pareto-optimal solutions according to $\succ$. 

For the second statement, we know that by Proposition~\ref{prop:maxset_parset}, the maximum resilience set of the Pareto front is equivalent to that of the whole solution space. Thus, by performing $\max_{re}\mathcal{S}^*$, the output of Algorithm~\ref{alg:overall} is the set of optimal solutions (according to $\succ_{re}$). 
Algorithm~\ref{alg:overall} terminates in a finite number of steps both because it requires solving at most $H$ instances of $P_{\epsilon}$, and each instance terminates in a finite number of steps. 
\end{proof}

\begin{proposition}\label{coro:num_pe}
        In the worst case, Algorithm~\ref{alg:overall} computes $H$ instances of problem $P_\epsilon$.
\end{proposition}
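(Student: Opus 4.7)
The plan is to show that the counter $\epsilon$ traverses an integer range of size at most $H$ and strictly increases by at least $1$ at every iteration, so the while-loop of Algorithm~\ref{alg:overall} can invoke $P_\epsilon$ at most $H$ times.

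First, I would look at the loop bounds. The variable $\epsilon$ is initialized to $\Trec-(H-1)$ and the while-loop condition is $\epsilon<\Trec+1$. Hence, the loop can only execute while $\epsilon$ lies in the integer range $[\Trec-(H-1),\, \Trec]$, which contains exactly $H$ integer values. Once $\epsilon$ leaves this range from above, the loop terminates (either because $\epsilon\ge \Trec+1$ or because $\epsilon$ is set to $+\infty$ on infeasibility).

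Second, I would argue that each iteration that actually solves an instance of $P_\epsilon$ increases $\epsilon$ by at least $1$. Whenever $P_\epsilon$ is feasible, the optimal control $\controls_{\epsilon}^*$ satisfies the $\epsilon$-constraint $\Trec-t_{rec}(\varphi,\states,0)\ge \epsilon$, i.e., $f_\epsilon^*\ge \epsilon$. The update on line~\ref{line:eps_update} then sets the new value of $\epsilon$ to $f_\epsilon^*+1\ge \epsilon+1$. Since $f_\epsilon^*$ is integer-valued, this is a strict increase by at least~$1$. If instead $P_\epsilon$ is infeasible, the algorithm sets $\epsilon=+\infty$ and exits, contributing only one further (inexpensive) feasibility check but no additional solve.

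Combining these two observations, the number of iterations in which a full MILP instance $P_\epsilon$ is solved is bounded by the number of distinct integer values of $\epsilon$ in $[\Trec-(H-1),\, \Trec]$, which is exactly $H$. The only non-trivial step is verifying the strict increase of $\epsilon$, which reduces to reading off the $\epsilon$-constraint; beyond that the argument is a direct counting bound, so I do not anticipate any real obstacle.
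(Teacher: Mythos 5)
Your proof is correct and follows essentially the same counting argument the paper gives (the paper compresses it to one line: the number of increments of $\epsilon$ is at most $H$, hence at most $H$ solves of $P_\epsilon$). Your added detail — that $f_\epsilon^*\geq\epsilon$ by the $\epsilon$-constraint, so each solve advances $\epsilon$ by at least one through the $H$ integer values in $[\Trec-(H-1),\Trec]$ — is exactly the justification the paper leaves implicit.
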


In the worst case, the number of increments of $\epsilon$ is $H$ (see line~\ref{line:eps_update} of Algorithm~\ref{alg:overall}), resulting in $H$ instances of solving $P_\epsilon$ at line~\ref{line:solve_pe}. 


\subsection{Computation Complexity}\label{subsec:complexity}

The computation complexity of Algorithm~\ref{alg:overall} consists of two major sources: the MILP problem and the multi-objective problem.
MILP problems are NP-hard and the computational complexity is highly dependent on the number of variables. 
In the worst case, a MILP problem solves a number of LP problems that are exponential in the number of binary and discrete variables. The complexity of LP is polynomial in the number of (real) variables. 

Let $\varphi$ be an STL formula with a set of atomic propositions $AP$. The Boolean semantics computation for STL atomic propositions introduces $O(H\cdot|AP|)$ binary variables; the Boolean semantics computation for $\varphi$ introduces $O(H\cdot|\varphi|)$ binary variables~\cite{raman2014model}. Hence, the number of binary and discrete variables is $O((|AP|+|\varphi|)\cdot H)$. The MILP encoding for the resiliency objectives introduces exactly $3\cdot H$ counter variables (i.e., $c^{\varphi,rec}_t,c^{1}_t,c^{2}_t$), hence this term is omitted. Note that the continuous variables are the sequences $\states\in \mathbb{R}^{n\cdot H}$ and $\controls\in\mathbb{R}^{m\cdot H}$. Since we need to solve at most $H$ MILP instances for $P_\epsilon$, then the overall complexity is $O(H\cdot(2^{(|AP|+|\varphi|)\cdot H} \cdot (H\cdot(m+n))^k))$ for some $k\geq 1$. 
Note that computing $\max_{re}\mathcal{S}^*$ in step (5) adds a cost quadratic in $H$ (see~\cite{chen2022stl}) and hence is negligible compared to the overall complexity.



\begin{remark}[Length of control horizon]
The length of the control horizon $H$ of the \probname problem should be carefully chosen. An excessively large $H$ introduces unnecessary computational complexity without significant performance improvement: since we optimize recoverability and durability relative to the first recovery episode, what happens after the durability period doesn't affect these two objectives. On the other hand, insufficiently large $H$ can make it difficult for the controller to provide an effective control action: if $\varphi$ is initially violated and $H$ is too small, it might be impossible to satisfy $\varphi$ within $H$, and so all control strategies will have the same objective values (worst possible $(\mathit{rec},\mathit{dur})$ values). 
\end{remark}

\section{Closed-loop control}\label{sec:mpc_dm}

In this section, we describe the remaining components of the \name framework: the MPC control strategy and DMs that selects a single solution from the set of optimal solutions.

\subsection{Model Predictive Control}

In the MPC setting, at each time step $t$, we solve Problem~\ref{def:res_problem} by setting $x_0$ to the current system state. 
The \rescontroller computes a set of optimal solutions $\mathcal{S}^*$. A 
DM selects a solution from $\mathcal{S}^*$ and then implements only the first step of the corresponding optimal control strategy. The control system is evolved following its dynamics. At time $t+1$, $x_0$ is set to the evolved system state; the next implemented control action is calculated similarly using the \rescontroller and a DM. This process is repeated at every remaining time step.

\subsection{Decision-Maker Design}

The optimal solutions $\mathcal{S}^*$ of Problem~\ref{def:res_problem} is a set of non-dominated $(\mathit{rec},\mathit{dur})$ pairs. At each step, an optimal solution is selected from $\mathcal{S}^*$ by
a DM.
We propose the following DM strategies representing different preferences in solution selection.
%
\textbf{\em Pro-recoverability DM}: selects the solution with maximum recoverability, representing a preference for rapid recovery.
%
\textbf{\em Pro-durability DM}: selects the solution with maximum durability, representing a preference for property maintenance post-recovery.
%
\textbf{\em Minimal-distance DM}: selects the optimal solution with minimal $L_2$-distance to the point $(\Trec, H-\Tdur)$ (the best attainable value of $(\mathit{rec},\mathit{dur})$).
%
\textbf{\em Adaptive DM}: respectively switches to pro-recoverability or pro-durability when maximum recoverability is less or greater than maximum durability. It represents a preference for the objective that is harder to achieve. 

We note that a DM strategy can also represent application-specific preferences beyond recoverability and durability (e.g., the average distance to the centerline of the lane in a lane-keeping problem). We leave this extension for future work.

\begin{remark}[a posteriori methods]
    The current design of \name uses an \emph{a posteriori} method: the complete set of optimal solutions $\mathcal{S}^*$ is first computed, then a DM choose one of them. However, we note that the above-defined minimal distance solution can be found without computing the Pareto front, but by solving the single-objective problem below:
    \begin{equation}\label{eq:min-distance-dm}
        s^*=\underset{\controls}{max}\;||(\Trec-t_{rec},t_{dur}-\Tdur),(\Trec,H-\Tdur)||
    \end{equation}
    where $\Trec, H-\Tdur$ are upper bounds on  $\Trec-t_{rec},t_{dur}-\Tdur$. 
    The solution $s^*$ is Pareto-optimal according to $\succ$ because, if it wasn't, there would exist $s'\succ s^*$ with $s'_1\geq s^*_1$ and $s'_2\geq s^*_2$, of which at least one is a strict inequality. Hence, $s'$ would be closer to $(\Trec,H-\Tdur)$ than $s^*$, which contradicts the fact that $s^*$ is the optimal solution of~\eqref{eq:min-distance-dm}.
    However, $s^*$ is not guaranteed to be an optimal solution to Problem~\ref{def:res_problem}, i.e.,  be Pareto-optimal according to $\succ_{re}$, unless we set $\Trec=0$ and $\Tdur=H$. We note that the latter is a perfectly reasonable choice for the bounds, representing the strictest possible requirements for both recoverability and durability.
\end{remark}

\section{Case Studies}\label{sec:casestudy}

In this section, we demonstrate the benefits of the STL-based resilient controller via two case studies. Experiments were performed on an Intel Core~i7-12700H CPU with 32GB of DDR5 RAM and a Windows~11 operating system. Our case studies have been implemented in MATLAB with YALMIP~\cite{Lofberg2004}; our implementation and case studies will be released in a publicly-available library.

\subsection{Lane Keeping}

We study resilient control in a lane-keeping problem.  We consider a linear, time-invariant single-track model for the vehicle with a constant nominal
longitudinal speed~\cite{mata2019robust}. The state-space representation of the model can be written as follows.
\begin{align*}
    \dot{\bm{x}}_{t}= \begin{bmatrix}
0 & 1 & 0 & 0\\
0 & a_{c1} & 0 & a_{c2}\\
0 & 0 & 0 & 1\\
0 & a_{c3} & 0 & a_{c4}
\end{bmatrix} \bm{x}_{t} + \begin{bmatrix}
0\\
\frac{2 C_{\alpha F}}{m}\\
0\\
\frac{2l_F C_{\alpha F}}{I_z}
\end{bmatrix} u_t,\hspace{0.6em} x_0 = [5, 6, 0, 2]^T
\end{align*}
where the state vector $\bm{x}_t=[y,{y}_v,\omega,{\omega}_v]^T$ with $y$ being the lateral position, ${y}_v$ the lateral velocity, $\omega$ the yaw angle, and ${\omega}_v$ the yaw velocity. Control actions $u_t$ are steering angles of the vehicle, which are bounded by the physical limitation of the vehicle: 
\begin{align*}
|u_t|\leq 0.72\;\text{rad},\hspace{2em} |u_t-u_{t+1}|\leq 0.72\;\text{rad}
\end{align*}
The parameters are defined as follows.
\begin{align*}
    a_{c1}&=-\frac{2C_{\alpha F}+2C_{\alpha R}}{m v},
    &a_{c2}&= -\frac{2l_{F}C_{\alpha F}-2l_{R}C_{\alpha R}}{mv}-v,\\
    a_{c3}&=-\frac{2l_{F}C_{\alpha F}-2l_{R}C_{\alpha R}}{I_{z}v},
    &a_{c4}&= -\frac{2l^2_{F}C_{\alpha F}+2l^2_{R}C_{\alpha R}}{I_{z}v}
\end{align*}
where $I_{z}$ is the inertial moment around the vehicle's $z$ axis; $l_{F}$ and $l_{R}$ are the distances between the CoG and the front and rear axles respectively. The constants $C_{\alpha F}$ and $C_{\alpha R}$ are front and rear cornering stiffness; $v$ is the constant nominal longitudinal speed. Our parameter selection is shown in Table~\ref{tab:car_params}. Letting $\Delta t=0.1$ secs be the length of one time-step, we have $\bm{x}_{t+1}=\bm{x}_{t}+\dot{\bm{x}}_{t}\cdot\Delta t$.

\begin{table}[t]
\centering
\begin{tabular}{@{}cccccccc@{}}
\toprule
        parameters & $l_F$ & $l_R$ & $C_{\alpha F}$ & $C_{\alpha R}$ &$I_z$ &$m$ &$v$\\ \midrule
values & 1.4 & 2.55 & 2200 & 2200 & 5757 & 2200  & 10 \\ \midrule
units & m & m & N/rad & N/rad &  kg/m$^2$ & kg & m/s \\\bottomrule
\end{tabular}
\caption{Selected vehicle parameters.}
\vspace{-6ex}
\label{tab:car_params}
\end{table}

\noindent\emph{Lane Keeping Property}: the vehicle should always remain within the lane boundaries in a time interval. 
\begin{align*}
    \varphi_{lk} \;=\; \mathbf{G}_{[0,h]}\,( p\wedge q) \;=\; \mathbf{G}_{[0,h]}((y_e\geq y_l) \wedge (y_e\leq y_u))
\end{align*}
where $y_e$ is the lateral difference between the vehicle and the center line of the lane. We set $y_u=1$ m, $y_l=-1$ m,
control horizon $H=60$, $h=2$, $\Trec=1.8$ secs, and $\Tdur=2.5$ secs. We apply \name to our vehicle model on a curvy track. 

We first evaluate the \name solution method by using Algorithm~\ref{alg:overall} to compute the optimal solutions at the initial step. The resulting optimal solutions are $\mathcal{S}^*=\{(-0.2,1.5),(0.1,-2.2),(0.2,-2.3)\}$; see Figure~\ref{fig:lk_traj}. In the top figure, the lane is indicated by the grey area and the starting location of the vehicle, marked by a star, is outside the lane. Each trajectory represents a predicted optimal trajectory for the vehicle and an optimal solution in $\mathcal{S}^*$. The middle figure shows the sequences of control actions for three optimal trajectories. The bottom figure shows the evolution of the lane-keeping requirement $\varphi_{lk}$ over time for the three optimal trajectories.

We now compare the different behaviors of the three optimal solutions. 
In the top figure, trajectory~1 represents a situation where the vehicle enters the lane the latest, and yet it remains in the lane till the end of the trajectory. We can also see that from the blue solid line in the bottom figure, $z^{\varphi_{lk}}$ recovers to~1 later than the others ($t_{rec}=2$ secs), but subsequently remains~1 for the longest duration ($t_{dur}=4$ secs); it thus results in the optimal solution $(-0.2,1.5)$.
Trajectory~3 represents a vastly different situation where the vehicle aggressively enters the lane first and stays in the lane, but quickly exits the lane due to overshooting. The yellow dashed line in the bottom figure reflects this situation: $z^{\varphi_{lk}}$
recovers to~1 the earliest ($t_{rec}=1.6$ secs) with the shortest subsequent duration ($t_{dur}=0.2$ secs), resulting in the optimal solution $(0.2,-2.3)$. 
Trajectory~2 represents an intermediate situation: the vehicle returns to satisfy $\varphi_{lk}$ with the second fastest recovery ($t_{rec}=1.7$ secs) and the second longest subsequent duration ($t_{dur}=0.3$ secs), yielding the optimal solution $(0.1,-2.2)$.

\begin{figure}[t]
	\centering
\includegraphics[width=\linewidth]{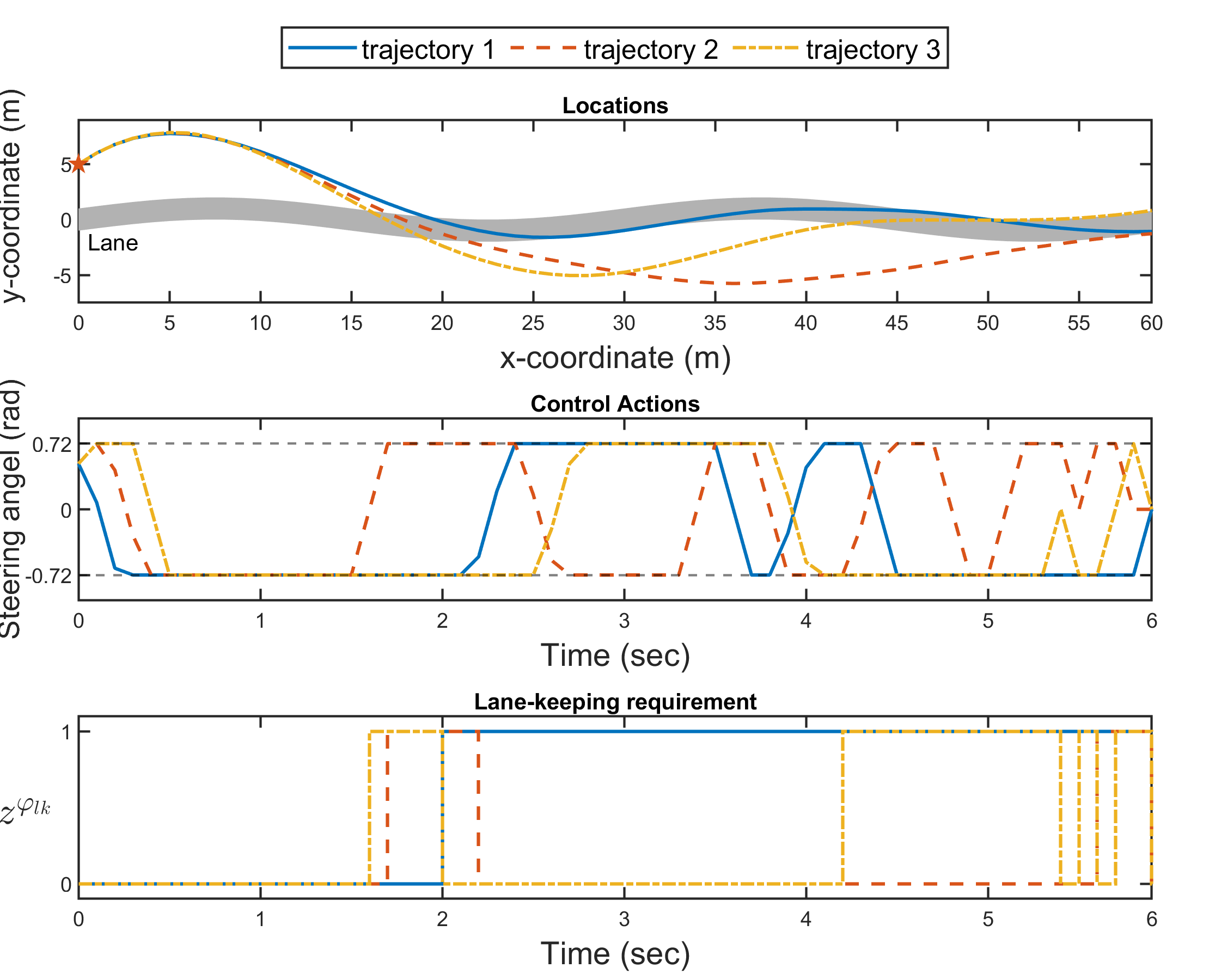}
    \vspace{-3ex}
	\caption{The optimal solutions provided by our \name framework at the initial step in a lane-keeping problem. }
    \label{fig:lk_traj}
    \vspace{-1ex}
\end{figure}

We then evaluate various DM strategies of our \name framework in the MPC setting. 
Solving Problem~\ref{def:res_problem} and selecting a control action took, on average, 78 msec on our machine.
We roll-out the MPC controller for a trajectory of 60 time-steps for each DM. The $(\mathit{rec},\mathit{dur})$ pair for the property $\varphi_{lk}$ for the pro-recoverability DM, pro-durability DM, adaptive DM, and minimal-distance DM are respectively: $(0.3,-2.1)$, $(-0.2,1.5)$, $(0,-1.9)$, and $(-0.2,1.5)$. See the results in Figure~\ref{fig:lk_mpc}.
As expected, the trajectory generated by the pro-recoverability DM has better recoverability yet worse durability compared to the pro-durability DM.
The trajectory generated with the adaptive DM has better recoverability compared to the pro-durability DM and better durability compared to the pro-recoverability DM, reflecting a balanced preference between recoverability and durability. The minimal-distance DM usually selects the same optimal solution as the pro-durability DM. This is because solutions with good recoverability often exhibit extremely bad durability due to overshooting, thus making them the farthest from the ideal resiliency value $(\alpha,H-\beta)$. 
This result evidences the usefulness of our approach in presenting the DM multiple, equally resilient, control strategies. In particular, we can see that optimizing for a fast recovery, which is roughly equivalent to maximizing STL time robustness, is not always the best strategy.


\begin{figure}[t]
    \centering
    \includegraphics[width=\linewidth]{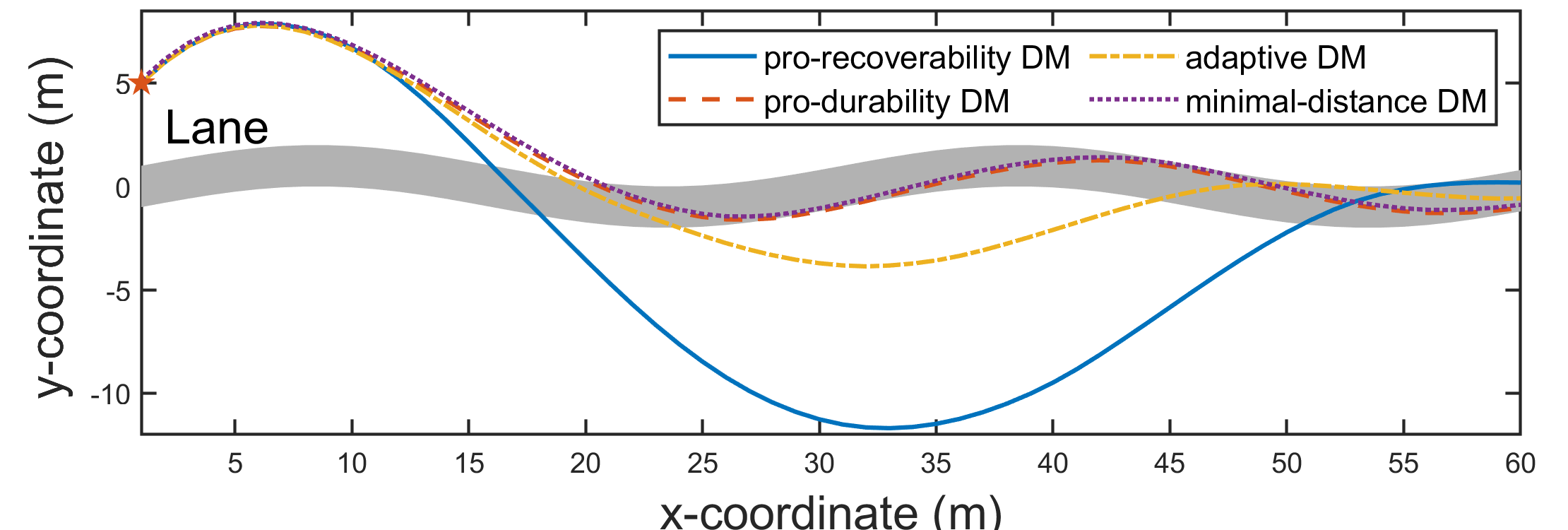}
    \vspace{-2ex}
    \caption{Simulated vehicle trajectories in \name.}
    \vspace{-2ex}
    \label{fig:lk_mpc}
\end{figure}


\subsection{Deadline-Driven Package Delivery}

We study a deadline-driven, multi-region cooperative package-delivery problem. The problem involves multiple 
controllable 
robots performing package deliveries by deadlines at multiple regions in a two-dimensional space. The robots are equipped with chargeable batteries. 

We extend a robot model in~\cite{rodionova2021time} with a battery state component. 
In an $N$-robot system, we denote the state vector of the $i$-th robot 
as $\bm{x}^i=[l_x^i, {v}^i_x,y^i,{l}^i_y,e^i,{v}^i_e]\in\mathbb{R}^6$, where $l^i_x,l^i_y$ are $x$, $y$ coordinates, ${v}^i_x,{v}^i_y$ are the $x$, $y$ velocities components, $e^i$ the battery level, and ${v}^i_e$ the battery charging rate. The full state vector of the multi-robot system is $\bm{x}=[\bm{x}^1,\ldots,\bm{x}^N]^T$.
Similarly, the control actions of the $i$-th robot are denoted by $\bm{u}^i=[u^i_1,u^i_2,u^i_3, e_{con}]$, where $u^i_1,u^i_2\in\mathbb{R}$ associate to coordinates, $u^i_3\in\{0,1\}$ indicates the charging status, and $e_{con}=-1$ is the battery \emph{consumption rate}; the control actions of the multi-robot system are $\bm{u}=[\bm{u}^1,\ldots,\bm{u}^N]^T$.
The state-space representation of an $N$-robot system at time $t$ is denoted as follows.
\begin{align*}
    \bm{x}_{t+1}= 
    F_{N}\cdot \bm{x}_{t} + G_N\cdot \bm{u}_{t} 
\end{align*}
where $\bm{x}_t$ and $\bm{u}_t$ are the system state and control actions, respectively. Matrices $F_N$ and $G_N$ are defined as follows.
\begin{align*}
    F_N &=I_N \otimes
    \begin{bmatrix}
       I_2\otimes A & 0 &0\\
       0 &  1 & t_s \\
       0 &    0 & 0 
        \end{bmatrix},\hspace{1em}
   A =  \begin{bmatrix}
1 & t_s \\
0 & 1 
\end{bmatrix}\\
G_N &=\begin{bmatrix} 
B_1 & 0 & \dots & 0\\
0 & B_2 &  \dots & 0 \\
\vdots & \vdots & \ddots & \vdots\\
0 & 0 & \dots & B_N 
\end{bmatrix},\hspace{.3em} B_i = \begin{bmatrix}
I_2\otimes b & 0 &0\\
0 &   d_u &0\\
0 &   e^i_{ch} &1\\
\end{bmatrix},\hspace{.3em} b = \begin{bmatrix}
 d_u \\   t_s
\end{bmatrix}
\end{align*} 
where $\otimes$ is the Kronecker product and $I_N$ is the identity matrix of size $N$. Parameter $e^i_{ch}$ is the battery \emph{charging rate} of the $i$-th robot, $t_s=0.1$ is the time-step size, and $d_u=0.005$.


\begin{figure*}[ht]
\centering
\subfloat[Trajectory 1]{\includegraphics[width=.5\textwidth]{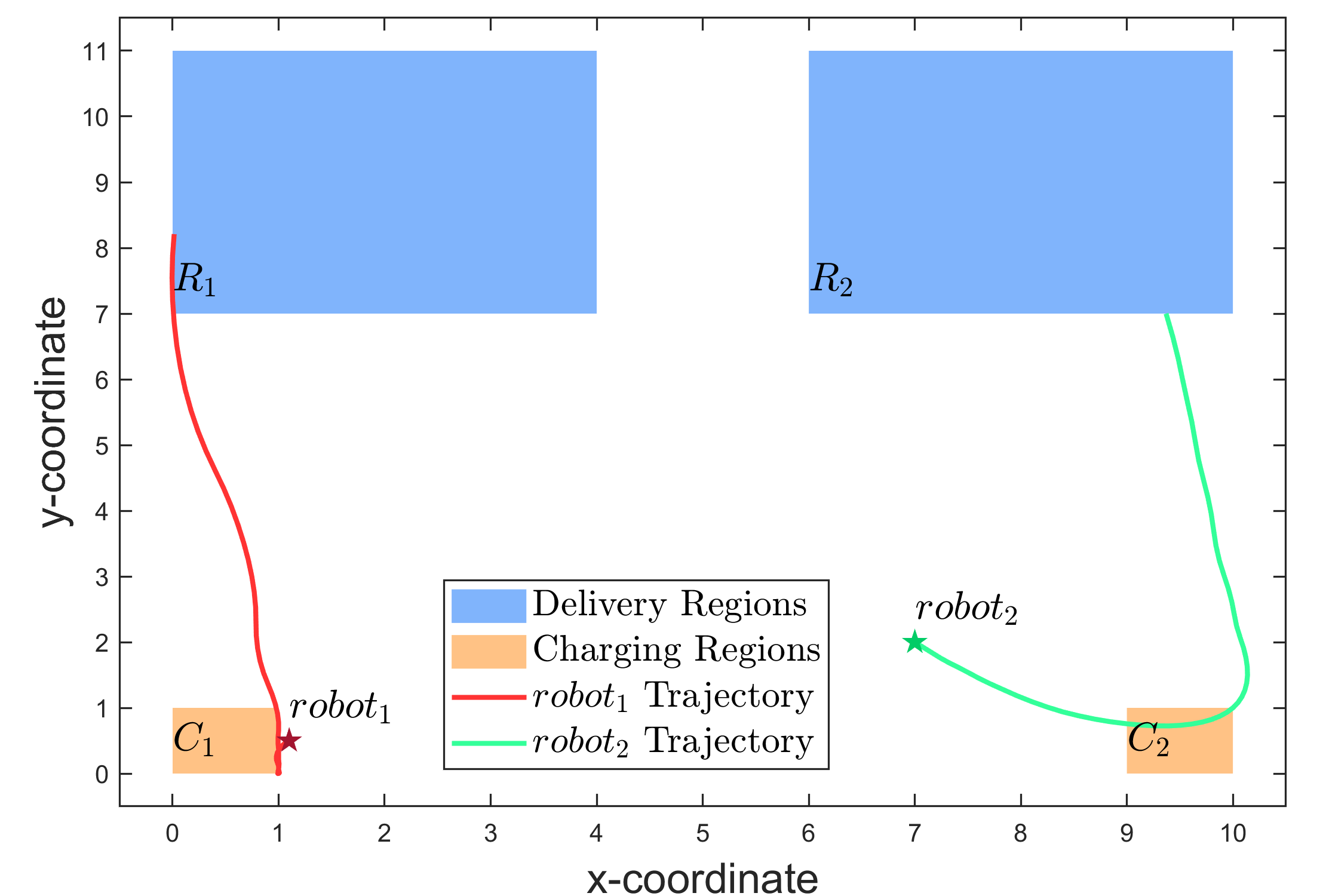}}
\subfloat[Trajectory 2]{\includegraphics[width=.5\textwidth]{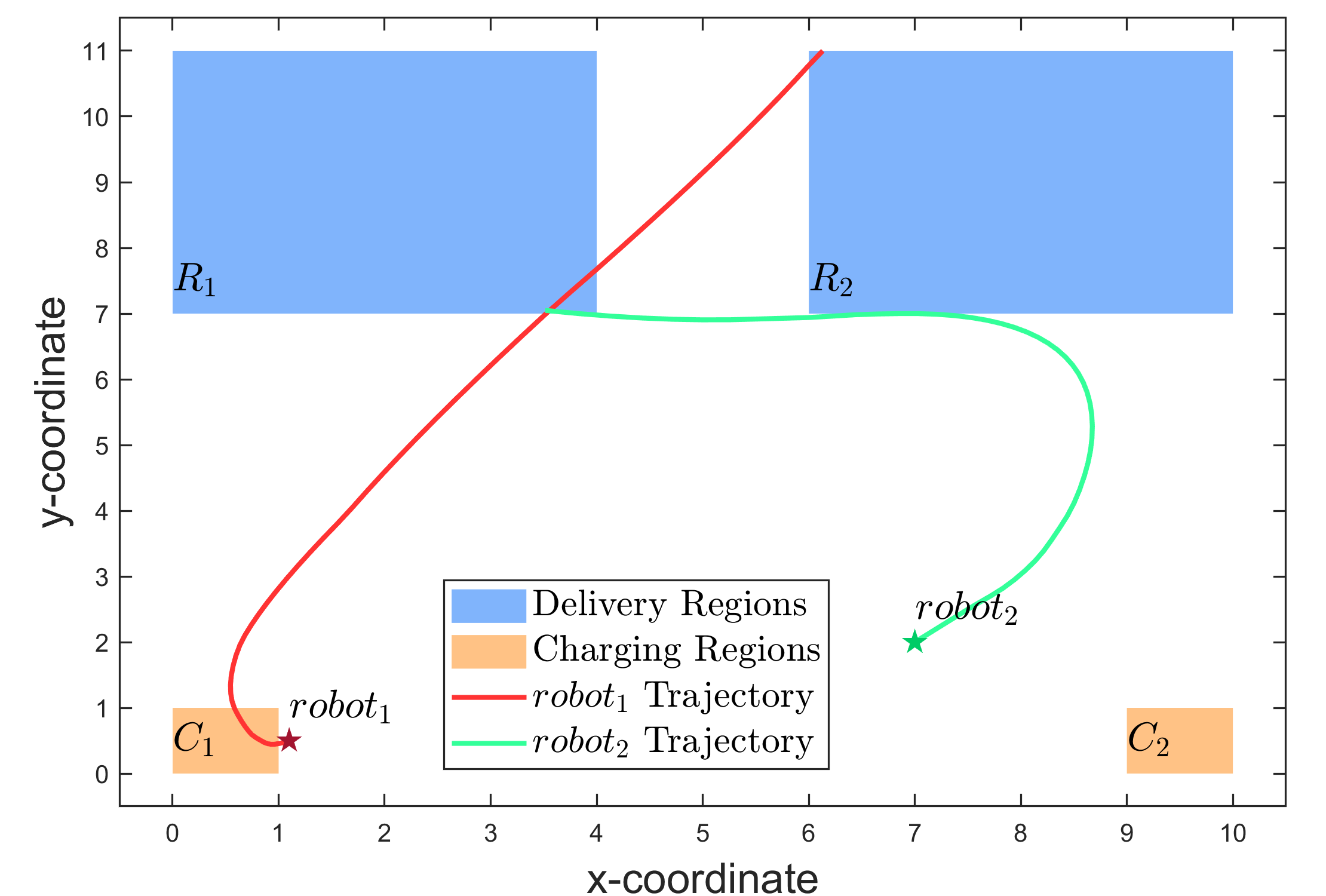}} 
\vspace{-1ex}
\caption{Optimal solutions provided by \name at the initial step in a deadline-driven, multi-region package-delivery problem.}
\vspace{-2ex}
\label{fig:mrs_traj}
\end{figure*}

\begin{figure*}[ht]
\centering
\subfloat[Trajectory 1: $\varphi_{del}$ has bad recoverability but good durability.]{\includegraphics[width=.49\textwidth]{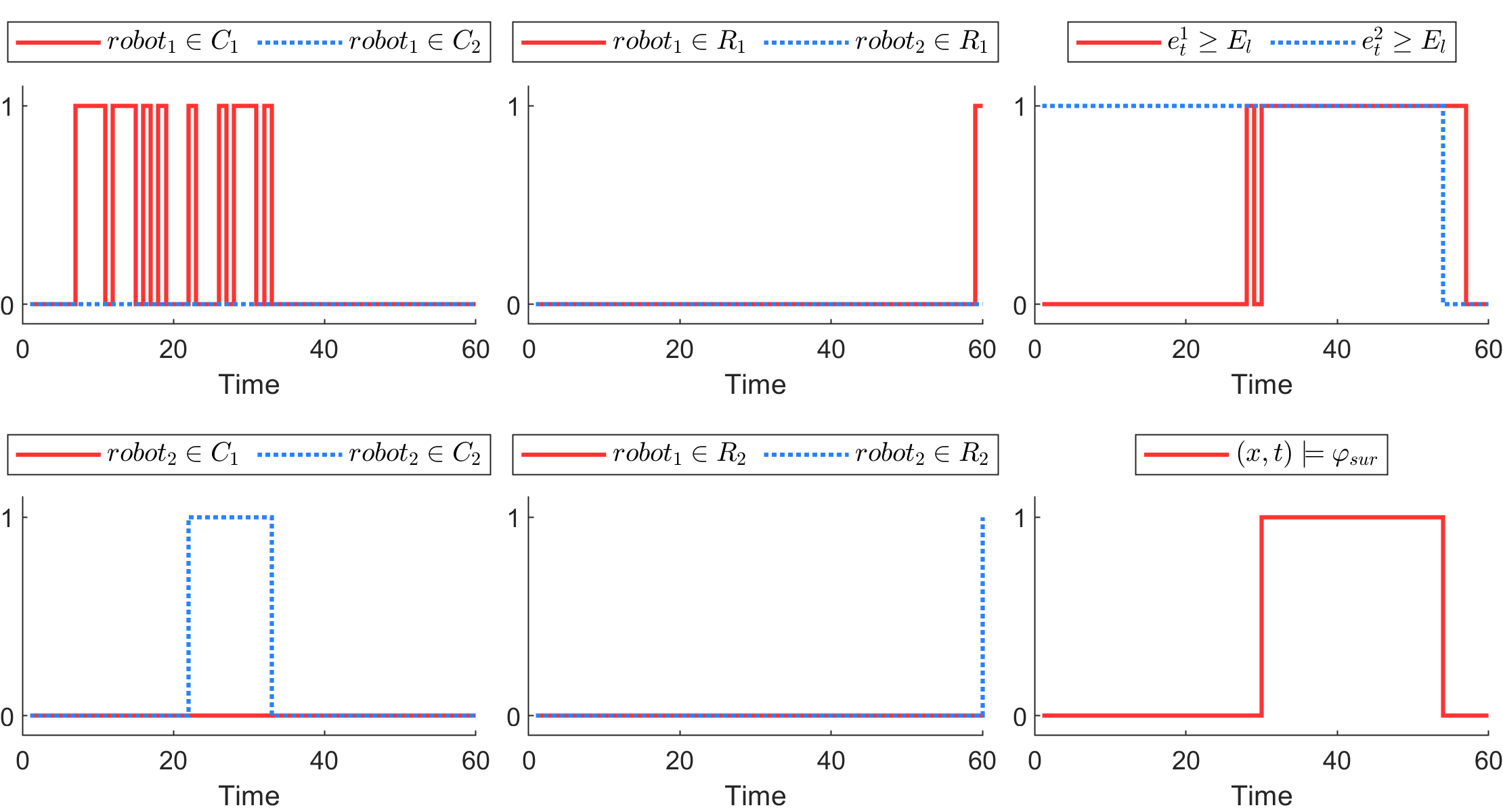}}\hfill
\subfloat[Trajectory 2: : $\varphi_{del}$ has good recoverability but bad durability.]{\includegraphics[width=.49\textwidth]{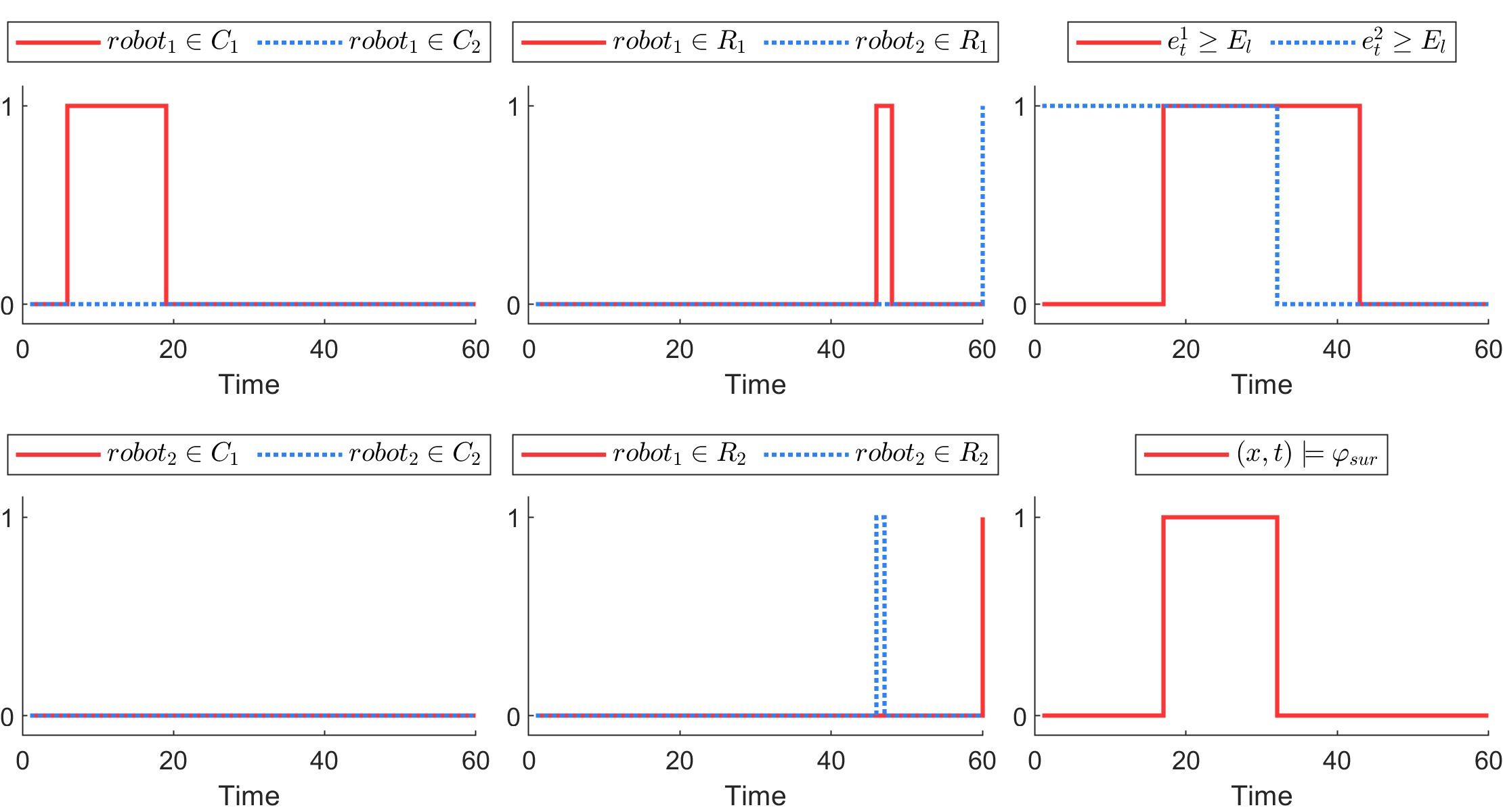}} 
\vspace{-1ex}
\caption{Evolution of the system states of two trajectories.}
\vspace{-2ex}
\label{fig:mrs_stl}
\end{figure*}

Consider two robots (i.e., $N=2$), denoted by $\mathit{robot}_1$ and $\mathit{robot}_2$, with $e^1_{ch}=7$ and $e^2_{ch}=2$ respectively representing an advanced and fast, and an outdated and slow, battery charging system.
\\
[0.2ex]
\emph{Deadline-driven Package Delivery}: A Package must be delivered by a robot by a deadline at a delivery region. 
Each of the two rectangular delivery regions, $R_1$ and $R_2$, has two deadlines defined by the time intervals of the $\mathbf{F}$ operators.
\begin{align*}
    \varphi_{s1} &=  \mathbf{F}_{[0,H/2]}\,(\mathit{robot}_1\in R_1 \vee \mathit{robot}_2\in R_1)\;\wedge\\
    &\hspace{8em}\mathbf{F}_{[H/2,H]}\,(\mathit{robot}_1\in R_1 \vee \mathit{robot}_2\in R_1)\\
    \varphi_{s2} &=  \mathbf{F}_{[0,H/2]}\,(\mathit{robot}_1\in R_2 \vee \mathit{robot}_2\in R_2)\;\wedge\\
    &\hspace{8em}\mathbf{F}_{[H/2,H]}\,(\mathit{robot}_1\in R_2 \vee \mathit{robot}_2\in R_2)
\end{align*}
where $\mathit{robot}_i\in R_j$ indicates the $i$-th robot is inside $R_j$, $i=1,2$, $j=1,2$. The requirements $\mathit{robot}_i\in R_j$ can be expressed by a set of four linear constraints.\footnote{For example, $\mathit{robot}_1\in R_1$ requires the $x^1$ to be greater than $R_1$'s lower bound on $x$, denoted by $x_{ub}$. Thus, we have $\bm{r}^1\cdot l\geq x_{ub}$, where $l=[1,0,0,0,0,0]^T$.} 
\\
[0.2ex]
\emph{Battery power requirement}: the battery power of the robots should remain above $E_{l}=10$.
\begin{align*}
    \varphi_{c1} = (e^1 \geq E_l),\hspace{2em}\varphi_{c2} = (e^2 \geq E_l)
\end{align*}
Robots' batteries can be charged in either of the two rectangular charging regions $C_1$ and $C_2$. 
Therefore, $u^1_3=1$ when $(\mathit{robot}_1\in C_1 )\vee (\mathit{robot}_1\in C_2)$ holds, and~0 otherwise; similarly, $u^2_3=1$ if and only if $(\mathit{robot}_2\in C_1) \vee (\mathit{robot}_2\in C_2)$ holds.
Constraints $\mathit{robot}_i\in C_j$ can be expressed as a set of four linear constraints similar to $\mathit{robot}_i\in R_j$. The overall requirement for the deadline-driven multi-region package delivery problem is defined as
\begin{align*}
    \varphi_{del} &= \varphi_{s1}\wedge\varphi_{s2}\wedge\varphi_{c1}\wedge\varphi_{c2}
\end{align*}

We set $H=60$, $\Trec=25$, $\Tdur=20$,  $x^1_0=[1.1,0,0.5,0,5,-1]$ and $x^2_0=[7,0,2,0,13,-1]$. The vectors specifying the lower and upper bounds on $x$ and those on $y$ of $R_1$, $R_2$, $C_1$, and $C_2$ are $[0,4,7,11]$, $[6,10,7,11]$, $[0,1,0,1]$ and $[9,10,0,1]$, respectively. We restrict the control actions $||u^i_1||,||u^i_2||\leq 1$ for $i=1,2$. 

We first compute the optimal solutions at the initial time-step in our \name framework using Algorithm~\ref{alg:overall}. The optimal solutions are $\mathcal{S}^*= \{(-4,4),(9,-5),(-3,1)\}$. Figure~\ref{fig:mrs_traj} shows the first two optimal solutions. Also, Figure~\ref{fig:mrs_stl} shows the evolution of the systems: in each sub-figure, from left to right, the top row shows the charging status of $\mathit{robot}_1$, package-delivery status at $R_1$, and the battery power level of robots; the bottom row shows the charging status of $\mathit{robot}_2$, package-delivery status at $R_2$, and the overall problem requirement $\varphi_{del}$.

In Figure~\ref{fig:mrs_traj}(a), an optimal situation needs $\mathit{robot}_2$ to go to $C_2$ for charging before package delivery at $R_2$ to ensure it has sufficient battery power for the deliveries. Meanwhile, $\mathit{robot}_1$ does not charge its battery in $C_1$ at full charging speed. This is because fast charging will not lead to quick satisfaction of $\varphi_{del}$ because of the slow package delivery at $R_2$ by $\mathit{robot}_2$. In Figure~\ref{fig:mrs_stl}(a), we can examine the system via the evolution of requirements: $\mathit{robot}_2\in C_2$ holds between $t=21$ and $t=33$, after which $\mathit{robot}_2\in R_2$ is true at $t=60$;
$\mathit{robot}_1\in C_1$ holds irregularly, after which $\mathit{robot}_1\in R_1$ is true at $t=59$. Overall, in the bottom-right figure, $\varphi_{del}$ is recovered late ($t_{rec}=29$), but remains true for a long period of time ($t_{dur}=24$); hence the solution $(-4,4)$.

In contrast, Figure~\ref{fig:mrs_traj}(b) depicts another optimal trajectory where $\mathit{robot}_2$ goes to $R_2$ and in turn $R_1$ for package deliveries without charging the battery, so to meet the deadlines. Meanwhile, $\mathit{robot}_1$ goes to $C_1$ and charges the battery at full charging speed to satisfy the battery power requirement as fast as possible; hence a quick recoverability w.r.t.\ $\varphi_{del}$. However, $\varphi_{del}$ does not remain true as long as in the first trajectory because $\mathit{robot}_2$ never charges the battery and thus its battery power quickly drops below $E_l$.  
Figure~\ref{fig:mrs_stl}(b) describes the system evolution: quick satisfaction of $\mathit{robot}_2\in R_2$, $\mathit{robot}_1\in R_1$, and $e^1\geq E_l$ collectively create the best recoverability of $\varphi_{del}$ ($t_{rec}=16$). However, even though two package deliveries at $R_1$ and $R_2$ meet the deadlines after recovery, $e^2\geq E_l$ cannot hold long enough because $\mathit{robot}_2\in C_1$ or $\mathit{robot}_2\in C_2$ never holds, causing the worst durability ($t_{dur}=15$). Hence the solution $(9,-5)$.

We then evaluate our DM strategies. We roll out the MPC controller for 60 steps and assess the recoverability and durability of the trajectories w.r.t.\ $\varphi_{del}$. 
Solving Problem~\ref{def:res_problem} and selecting a control action took, on average, 11.5 seconds on our machine.
The computational complexity is due in large part to the extensive nature of the STL requirements needed for this case study. We consider a strategy for reducing the execution time in Section~\ref{sec:conclusion}.
The $(\mathit{rec},\mathit{dur})$ pair of the trajectory with the pro-recoverability DM, the pro-durability DM, the adaptive DM, and the minimal-distance DM are respectively $(9,-5)$, $(-4,4)$, $(2,-1)$, and $(-4,4)$. As expected, the trajectories generated by the first three DM strategies respectively reflect a preference for recoverability, durability, and the recoverability-durability tradeoff. The minimal-distance DM shows a preference for good durability over good recoverability with overshooting.

\section{Related Work}\label{sec:related}

\textbf{Resilience in CPS}. 
Logic-based formulations of resilience in CPS have been proposed. The time robustness semantics for STL is considered equivalent to resilience in~\cite{mehdipour2021resilience}.  However, it can only quantify the recoverability of STL violations but not the subsequent durability.
Aksaray et al.~\cite{aksaray2021resilient} propose a ``time shifting'' STL and a resilient controller that maximizes the robustness value of the shifted formula as quickly as possible. This approach, however, does not consider the STL satisfaction durability post-recovery.
Resilient control frameworks include work by Bouvier et al.~\cite{bouvier2021quantitative} and Zhu et al.~\cite{zhu2011robust}.  Their non-logic-based notions of resilience, however, do not readily lend themselves to systems subject to diverse and sophisticated temporal requirements.
A survey on resilient multi-robot systems~\cite{prorok2021beyond} discusses how resilience is defined, measured, and maintained across various robotics domains.
Our work is based on the STL-based formulation of resiliency proposed by Chen et al.~\cite{chen2022stl}.  In this approach, the resilience of an STL formula takes into account both its recoverability and durability, which are quantified by sets of real-valued pairs.

\textbf{Control under STL specifications}.
In~\cite{belta2019formal}, a controller synthesis problem is solved to ensure that the behavior of the resulting control system satisfies the desired STL specifications. 
The STL robustness controller~\cite{raman2014model} uses a MILP encoding of an optimization problem to maximize the space robustness~\cite{donze2010robust} of the target STL specification. 
The controller synthesis problem for CPS subject to STL specifications has been considered in the context of
reactive control~\cite{raman2015reactive}, relaxed constrained MPC control~\cite{sadraddini2015robust}, and ``STL-based requirements priorities learning'' via robustness
slackness~\cite{cho2018learning}. 
Extensions to the original
robustness definition of STL specifications are used to tackle its disadvantages in optimization problems~\cite{lindemann2019robust,haghighi2019control}.

Instead of space robustness, the time-robust control problem~\cite{rodionova2021time} focuses on right-time robustness, which is critical in the presence of timing uncertainty. It maximizes the right-time robustness of an STL specification of a discrete linear system.
A left-right combined time robustness notion is proposed in~\cite{rodionova2022combined} to address the weakness of left- and right-time robustness for a single-directional time shift. It also proposes a control algorithm for linear systems that maximizes the combined time robustness using MILP.
An event-triggered MILP-based MPC framework~\cite{lin2020optimization} has been designed to maximize the overall space and time tolerances of the robustness degree of STL specifications for robot agents.

In contrast, we formally define the \probname problem for CPS with STL-based requirements as one that maximizes recoverability and durability, resulting in a multi-objective optimization problem.  
To the best of our knowledge, we are the first to consider a resilient control framework that co-maximizes recoverability and durability.

\section{Conclusion}\label{sec:conclusion}

We presented \emph{\name}, a resilient control framework for CPS subject to STL requirements. In \name, we defined the problem of resilient control as one of multi-objective optimization that maximizes both CPS  recoverability and durability w.r.t\ the desired STL properties. We proposed a solution method
that uses a MILP encoding and an a posteriori method for computing the precise set of non-dominated optimal solutions. Each optimal solution represents an optimally resilient trajectory of the control system. We also proposed a number of DM strategies that represent various preferences for selecting a single optimal solution. We illustrated \name on two case studies: lane keeping and deadline-driven multi-region package delivery. Collectively, our results showed the effectiveness of our solution methods in achieving resilient control, and demonstrated the effects of DM preferences. 


Future work will consider application-specific
DM strategies that go beyond recoverability and durability; e.g., in the lane-keeping problem, the average
distance to the centerline of the lane.
We will also investigate learning a \emph{neural controller} (NC) for our \name setup. The controller presented in this paper can be run repeatedly in simulation mode to provide the training data for the NC. Approaches of this nature can be found in~\cite{stoller2020neural,chen2021mpc}. Such an NC is expected to improve upon the execution time of the MPC controller (which has to solve a multi-objective MILP problem at every time step) for the package-delivery case study by orders of magnitude. 
\balance
\bibliography{mybibfile}

\end{document}